\newtheorem{theorem}{Theorem}
\newtheorem{assumption}{Assumption}
\newtheorem{condition}[theorem]{Condition}
\newtheorem{conjecture}[theorem]{Conjecture}
\newtheorem{definition}[theorem]{Definition}
\newtheorem{example}[theorem]{Example}
\newtheorem{lemma}[theorem]{Lemma}
\newtheorem{prop}[theorem]{Proposition}
\newtheorem{remark}[theorem]{Remark}
\newcommand{\bearno}{\begin{eqnarray*}}
\newcommand{\enarno}{\end{eqnarray*}}
\title{A note on optimal expected utility of dividend payments with proportional reinsurance}
\author{Xiaoqing Liang}
\address{School of Sciences, Hebei University of Technology, Tianjin 300401, P.R. China}
\email{liangxiaoqing115@hotmail.com}
\author{Zbigniew Palmowski}
\address{Faculty of Pure and Applied Mathematics,
Wroc\l aw University of Science and Technology,
Wyb. Wyspia\'nskiego 27, 50-370 Wroc\l aw, Poland}
\email{zbigniew.palmowski@gmail.com}
\thanks{This work is partially supported by the National Science Centre under the grant 2015/17/B/ST1/01102.The first author is partially supported by the National Natural Science Foundation of China (11571189) and High School National Science Foundation of Hebei Province (QN2016176). Both of the authors kindly acknowledge partial support by the project RARE -318984, a Marie Curie IRSES Fellowship within the 7th European Community
Framework Programme.}
\date{\today}
\keywords{}
\begin{document}

\begin{abstract}
In this paper, we consider the problem of maximizing the expected  discounted utility of dividend payments
for an insurance company that controls risk exposure by purchasing proportional reinsurance. We assume the preference of the insurer is of CRRA form. By solving the corresponding Hamilton-Jacobi-Bellman equation, we identify the value function and the corresponding optimal strategy. We also analyze the asymptotic behavior of the value function for large initial reserves. Finally, we provide some numerical examples to illustrate the results and analyze the sensitivity of the parameters.

\vspace{3mm}

\noindent {\sc Keywords.} Stochastic optimal control $\star$ Hamilton-Jacobi-Bellman equation $\star$ Optimal dividend $\star$ Proportional reinsurance

\end{abstract}

\maketitle

\pagestyle{myheadings} \markboth{\sc X.\ Liang --- Z.\ Palmowski}
{\sc Optimal expected utility of dividend payments with proportional reinsurance}

\vspace{1.8cm}

\tableofcontents

\newpage

 {\section{Introduction}}
In recent years there has been increasing attention towards the
utilization of stochastic control theory to insurance-related problems.
This is due to the fact that a company, such as a property-liability
insurance company or a pension-fund management company, can control
reinsurance strategies or investment strategies and can pay dividends
to maximize (or minimize) a certain objective function under different
constraints. Two kinds of risk processes have been considered.
The first one concerns classical Cram\'er-Lundberg process being drift process minus
compound Poisson process, see e.g. \citet{Buhlmann}, \citet{Hipp}, \citet{azcue2005optimal, azcue2015optimal}.
Later this case was generalized to the spectrally negative L\'evy risk process, see
\citet{APP, APP2}, \citet{KPal}, \citet{Loeffen1, loeffen2009optimal}, \citet{LR} and references therein.
The second risk process, considered also in this paper,
is a diffusion surplus risk model. In this model, the liquid asset processes
of the corporation are driven by Brownian motion with constant drift
and diffusion coefficients. The drift term corresponds to the expected
profit per unit time, while the diffusion term is interpreted as risk.
The classic studies on this subject are those by
\citet{GerberEllias}, \citet{JeanShir}, \citet{Cade}, \citet{asmussen1997controlled}, \citet{asmussen2000optimal}, \citet{bai2010optimal},
\citet{hojgaard1999controlling, hojgaard2004optimal}, \citet{paulsen2003optimal, paulsen2008optimal}, \citet{Zhou}, \citet{david2005minimizing} and many others.
The details can be found in the survey paper \citet{Hansjoerg} and in the book \citet{Schmbook}.

The goal of this paper is to maximize the expected  discounted utility of dividend payments
for an insurance company whose reserve evolves in time according to a diffusion process
and which controls risk exposure by purchasing proportional reinsurance.
That is, in this paper we formally consider the following optimization problem.

Let ($\Omega,\ \mathcal F,\ P$) denote a complete probability space endowed with information filtration ${\{{\mathcal {F}}_{t}\}}_{t\geq 0}$ and $\{B_t\}_{t\geq 0}$ be a standard Brownian motion adapted to the filtration.
Let $R$ be a risk process being a Brownian motion with drift:
\begin{eqnarray}\label{R}
R_{t}= (1+\theta) a t-Y_t, ~~~~~~~~R_0=x
\end{eqnarray}
for the aggregate cumulative amount of claims counted up to time $t$:
\begin{eqnarray*}
d Y_{t}=a dt-b d B_{t},~~~~~~~~~Y_0=0,
\end{eqnarray*}
where $a$ and $b$ are positive constants, $x\geq 0$ is the initial surplus, $(1+\theta)a$
is the premium rate with the safety loading $\theta >0$.

Apart of the risk process $R_t$ we will consider the dividend payments.
Let $C=(C_t)_{t\geq 0}$ be an adapted and nondecreasing process representing all accumulated dividend payments up to time $t$.
In our model we assume that $(C_t)_{t\geq 0}$ is absolutely continuous with respect to Lebesgue measure.
Hence we suppose that the process $C$ admits almost surely a density process denoted by $c_t \geq 0$
modeling the intensity of the dividend payments in continuous time.

In our model, we add another new and very important feature in the context of dividend payments with utility function.
We consider reinsurance policy, in which
part of the premium rate $(1+\eta)q_ta$ for some proportion $q_t\in [0,1]$
is diverted to some reinsurer who will cover $q_t$ of arrived claims $Y_t$.
In this way, the insurance company can reduce its risk exposure
and therefore the reinsurance has been extensively studied, see for example
\citet{asmussen2000optimal}, \citet{azcue2005optimal}, \citet{choulli2003diffusion}, \citet{chen2013optimal},
\citet{hojgaard1999controlling, hojgaard2004optimal}, \citet{liang2012dividends}, \citet{zhou2012optimal},
and references therein.

Thus the reserve process $X^{\pi}_t$ evolves as follows:
\begin{eqnarray}
d X^{\pi}_t&=&((1+\theta)a-(1-q_t)a ) d t+b(1-q_t)d B_t-(1+\eta)q_ta dt-c_t dt \nonumber\\
&=&(\theta-\eta q_t)a d t+b(1-q_t)d B_t-c_t dt,\label{control}
\end{eqnarray}
where $\pi$ in the superscript denotes a strategy which is described by a two-dimensional stochastic process $(q_t, c_{t})$
that supposed to be chosen in optimal way, where the criterium for the optimality will be specified later.
We will assume that $\eta\geq \theta$. When $\eta>\theta$, the fraction of the premiums diverted to the reinsurer is larger than that of each claim covered by the reinsurer,
this is called non-cheap reinsurance.
When $\eta=\theta$, we say the reinsurance is cheap.
Both of these cases will be considered in this paper.

We observe the regulated process $X_t$ until the time of ruin:
\begin{equation*}
\tau=\inf\{t\geq 0\colon X^\pi_t<0\}.
\end{equation*}

We define the target value function as
 \begin{equation}
 V(x)=\max_{\pi\in \Pi}\mathbb{E}_x\left[\int^{\tau}_0 e^{-\beta s}u(c_s)\;d s\right],
 \label{ORID-value-function}
 \end{equation}
where $\beta (> 0)$ is a discount factor, $u$ is some fixed utility function, $\mathbb{E}_x$ means expectation with respect to $\mathbb{P}_x(\cdot)=P(\cdot|X_0=x)$
and maximum is taken over all admissible strategies $\Pi$.
A strategy $\pi$ is said to be admissible if
 $(q_t, c_{t})$ is $\mathcal F$$_{t}$-progressively measurable and satisfies $0 \leq q_t\leq
  1, c_t\geq 0$ for all $t\geq0$. Finally, we assume that the ruin cannot be caused by the dividend payment.
Usually it is assumed that $u:\mathbb{R}_{\geq 0}\to\mathbb{R}_{\geq 0}$ is differentiable and nonnegative.

In fact, in this paper we will consider only the Constant Relative Risk Aversion (CRRA) utility function:
\begin{equation}\label{assumption}
u(c)=\frac{c^p}{p},\qquad p\in(0,1).
\end{equation}

For above dividend problem we will prove the verification theorem
producing the Hamilton-Jacobi-Bellman (HJB) equation for optimal value function.
This is done in Section \ref{sec:HJB}.
Later (under some additional technical assumptions) we will solve it producing optimal strategy
which appear to be a Markovian one, that is $q_t=q(X^\pi_t)$ and $c_t=c(X^\pi_t)$
for some functions $q$ and $c$ given explicit.
In particular, for utility function \eqref{assumption}
we will prove that when the reserve is sufficiently small (less than
identified level $x^*$), insurance company is willing to buy part of reinsurance as well as diverting part of premium.
When the reserve is larger than $x^*$ then
the insurance company is able to afford all arrived claims and
the optimal strategy excludes the purchase of reinsurance.
Following \citet{hubalek2004optimizing} we also identify the optimal strategy for large reserves as $x\rightarrow \infty$.
Both separate cases of non-cheap and  cheap reinsurance polices are considered
in Sections \ref{sec:noncheap} and \ref{sec:cheap}. We also present some numerical examples to illustrate the results in Section \ref{sec:numericalexamples}.
We ends our paper by conclusions \ref{sec:conclusions}.

{\section{HJB equation}\label{sec:HJB}}
 For nonnegative $v\in \mathcal{C}^2$ with our optimization problem we associate the following
 Hamilton-Jacobi-Bellman (HJB) equation:
 \begin{eqnarray}
 & &\max_{0\leq q\leq1,c\geq 0}\bigg\{(\theta-\eta q)a v{'}(x)
 +\frac{1}{2}b^2(1-q)^2v{''}(x)-c v{'}(x)+u(c)-\beta v(x)
 \bigg\}=0,
 \label{ORID-HJB}
 \end{eqnarray}
 with the boundary condition
\begin{eqnarray}
v(0)=0;          \label{ORID-HJB-boundary}
 \end{eqnarray}
see \citet{fleming2006controlled} for a beautiful overview.
From now on we look only for those solutions $v$ of above HJB equation that
can be equal to the value function $V$. In other words, we exclude those solutions that for any reasons
cannot be a value function.

We start from the classical Verification Lemma.
\begin{lemma}\label{ORID-verif-lemma}
Suppose that $v(x)\in C^{2}$ is a nonnegative solution of the HJB equation (\ref{ORID-HJB}) with the boundary condition (\ref{ORID-HJB-boundary}). Then $v(x)\geq V(x)$ on $(0, \infty)$
where $V$ is the value function defined in (\ref{ORID-value-function}).
\end{lemma}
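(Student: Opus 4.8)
The plan is to apply It\^o's formula to the discounted process $e^{-\beta s}v(X^\pi_s)$ and to turn the fact that $v$ solves the HJB equation into a pointwise inequality that bounds the expected discounted utility by $v(x)$ for \emph{every} admissible strategy. Concretely, I would fix an arbitrary admissible $\pi=(q_t,c_t)$ and an initial reserve $x>0$, and apply It\^o's formula to $e^{-\beta s}v(X^\pi_s)$ on $[0,t\wedge\tau]$. Using the dynamics \eqref{control}, the finite-variation part of the expansion is the integral over $s$ of
\[
e^{-\beta s}\Big[-\beta v(X^\pi_s)+(\theta-\eta q_s)a\,v'(X^\pi_s)+\tfrac12 b^2(1-q_s)^2 v''(X^\pi_s)-c_s v'(X^\pi_s)\Big],
\]
together with the local-martingale term $\int_0^{t\wedge\tau} e^{-\beta s}b(1-q_s)v'(X^\pi_s)\,dB_s$.

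Since $v$ solves \eqref{ORID-HJB}, the maximand there is nonpositive for every admissible pointwise choice of control; evaluating it at the realized $(q_s,c_s)$ gives
\[
-\beta v(X^\pi_s)+(\theta-\eta q_s)a\,v'(X^\pi_s)+\tfrac12 b^2(1-q_s)^2 v''(X^\pi_s)-c_s v'(X^\pi_s)\le -u(c_s).
\]
Substituting this bound into the It\^o expansion and rearranging yields the key inequality
\[
\int_0^{t\wedge\tau} e^{-\beta s}u(c_s)\,ds\le v(x)-e^{-\beta(t\wedge\tau)}v(X^\pi_{t\wedge\tau})+\int_0^{t\wedge\tau} e^{-\beta s}b(1-q_s)v'(X^\pi_s)\,dB_s.
\]

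Next I would take expectations. Because the stochastic integral is only a local martingale, I introduce a localizing sequence $\tau_n\uparrow\infty$ (for instance exit times of $X^\pi$ from bounded intervals) so that the stopped integral is a true martingale of zero mean. As $v\ge 0$, the boundary term $e^{-\beta(t\wedge\tau\wedge\tau_n)}v(X^\pi_{t\wedge\tau\wedge\tau_n})$ is nonnegative and can be dropped, giving $\mathbb{E}_x\big[\int_0^{t\wedge\tau\wedge\tau_n} e^{-\beta s}u(c_s)\,ds\big]\le v(x)$. Since $u\ge 0$, monotone convergence then lets me send $n\to\infty$ and $t\to\infty$ to obtain $\mathbb{E}_x\big[\int_0^{\tau} e^{-\beta s}u(c_s)\,ds\big]\le v(x)$; taking the supremum over admissible $\pi$ yields $V(x)\le v(x)$ on $(0,\infty)$.

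The main obstacle is the justification that the stochastic integral contributes nothing in expectation: without an a priori growth bound on $v'$ one cannot assert the martingale property directly, so the argument must proceed along the localizing sequence and then pass to the limit, with the nonnegativity of $v$ doing the work of discarding the endpoint term. A secondary point to handle cleanly is the behaviour at ruin: by continuity of the diffusion $X^\pi_{\tau}=0$ on $\{\tau<\infty\}$, whence $v(X^\pi_{\tau})=v(0)=0$ by \eqref{ORID-HJB-boundary}, which is consistent with dropping the nonnegative boundary term and confirms that no utility contribution is lost at the stopping time.
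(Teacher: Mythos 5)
Your proposal is correct and follows essentially the same route as the paper's proof: It\^o's formula applied to $w(s,x)=e^{-\beta s}v(x)$, the pointwise HJB inequality evaluated at the realized controls $(q_s,c_s)$, localization so that the stochastic integral is a true zero-mean martingale, and then Fatou/monotone convergence together with $v\ge 0$ to discard the boundary term and pass to the limit over all admissible strategies. The only detail to state precisely is that the localizing exit times must be from intervals $[\xi_1,\xi_2]$ with $\xi_1>0$ (the paper takes $0<\xi_1<x<\xi_2$ and lets $\xi_1\downarrow 0$), since $v'$ may blow up at the origin --- e.g.\ $v(x)=Mx^p/p$ in the cheap-reinsurance case --- and it is the boundedness of $v'$ on the stopped compact range, together with $q_s\in[0,1]$, that justifies the martingale property you rightly flag as the main obstacle.
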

\begin{proof}
Fix $x>0$ and choose a pair of admissible strategies $c(\cdot)$ and $q(\cdot)$.
Select $0<\xi_1<x<\xi_2$.
\\Define
\[\tau_x=\inf\{t\geq 0: X^{\pi}_t=x\}\]
and
\[w(t,x)=e^{-\beta t}v(x).\]
By applying It${\rm \hat{o}}$'s formula to the process $w(t,X^{\pi}_t)$, we obtain
\begin{align}
&w\left(t\wedge \tau_{\xi_1}\wedge \tau_{\xi_2},X^{\pi}_{t\wedge \tau_{\xi_1}\wedge \tau_{\xi_2}}\right)
\nonumber\\
&=v(x)+\int^{t\wedge \tau_{\xi_1}\wedge \tau_{\xi_2}}_0 e^{-\beta s}
\left\{(\theta-\eta q_s)a v{'}(X^{\pi}_s)
 +\frac{1}{2}b^2(1-q_s)^2v{''}(X^{\pi}_s)-c_s v{'}(X^{\pi}_s)-\beta v(X^{\pi}_s)\right\}ds \nonumber\\
&~~~~+\int^{t\wedge \tau_{\xi_1}\wedge \tau_{\xi_2}}_0 e^{-\beta s}v{'}(X^{\pi}_s)b(1-q_s)dB_s\nonumber\\
&\leq v(x)-\int^{t\wedge \tau_{\xi_1}\wedge \tau_{\xi_2}}_0 e^{-\beta s}
u(c_s)ds+\int^{t\wedge \tau_{\xi_1}\wedge \tau_{\xi_2}}_0 e^{-\beta s}v{'}(X^{\pi}_s)b(1-q_s)dB_s, \label{itow}
\end{align}
where the inequality is derived from the HJB equation \eqref{ORID-HJB}.
Since the reinsurance strategy $q_s$ lies between $0$ and $1$, therefore the stochastic
integration term of \eqref{itow} is actually a martingale. Taking expectation on both sides of \eqref{itow}, we get
\begin{align}
\mathbb{E}_x\left[w\left(t\wedge \tau_{\xi_1}\wedge \tau_{\xi_2},X^{\pi}_{t\wedge \tau_{\xi_1}\wedge \tau_{\xi_2}}\right)+\int^{t\wedge \tau_{\xi_1}\wedge \tau_{\xi_2}}_0 e^{-\beta s}
u(c_s)ds\right]\leq v(x). \label{eq:conv}
\end{align}
Now let $\xi_1 \downarrow 0$, $\xi_2\rightarrow \infty$ and $t\rightarrow \infty$, then $t\wedge \tau_{\xi_1}\wedge \tau_{\xi_2}\uparrow \tau$ almost surely. Thus by applying Fatou's lemma for the first term of the expectation in \eqref{eq:conv}, we have
\[\liminf\mathbb{E}_xw\left(t\wedge \tau_{\xi_1}\wedge \tau_{\xi_2},X^{\pi}_{t\wedge \tau_{\xi_1}\wedge \tau_{\xi_2}}\right)\geq \mathbb{E}_x\left[\liminf w\left(t\wedge \tau_{\xi_1}\wedge \tau_{\xi_2},X^{\pi}_{t\wedge \tau_{\xi_1}\wedge \tau_{\xi_2}}\right)\right]=0,\]
and by applying the monotone convergence theorem, the second term of the expectation in \eqref{eq:conv} converges to
\[\mathbb{E}_x\left[\int^{\tau}_0 e^{-\beta s}
u(c_s)ds\right].\]
Hence, we can conclude that:
\begin{align*}
\mathbb{E}_x\left[\int^{\tau}_0 e^{-\beta s}
u(c_s)ds\right]\leq v(x).
\end{align*}
Maximization over all admissible strategies $c(\cdot)$ and $q(\cdot)$ gives the final result.
\end{proof}

Note that if \eqref{assumption} holds, then supremum in \eqref{ORID-HJB} without constraints
is realized for
\begin{eqnarray}
q(x)&=&1+\frac{\eta a}{b^2}\frac{v{'}(x)}{v{''}(x)}\label{ORID-rein-prop-rate}
\end{eqnarray}
and
\begin{equation}\label{func}
c(x)=(v{'}(x))^{-\frac{1}{1-p}}.
\end{equation}

Taking strategy $\pi=(q(X^\pi_t), c(X^\pi_t))$
makes process $X^\pi_t$ to be a diffusion.

Denote:
\begin{align}
\alpha=1+\frac{2b^2\beta}{\eta^2a^2}. \label{ORID-alpha}
\end{align}

We will solve our optimization problem under two family of assumptions:
\begin{assumption}\label{thm1a}
$\eta>\theta$, $\alpha(1-p)>1$ and $(2-\alpha(1-p))\eta-2\theta<0$
\end{assumption}
which is for non-cheap reinsurance and
\begin{assumption}\label{thm2a}
$\eta=\theta$ and $\alpha(1-p)>1$
\end{assumption}
which is for cheap reinsurance.

For the optimal value function $V$, we can also derive the following property, a similar result can be found in \citet{choulli2003diffusion}.
\begin{lemma}
The optimal value function $V$ is increasing and strictly concave.
\end{lemma}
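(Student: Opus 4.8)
The plan is to establish both properties directly from the definition \eqref{ORID-value-function}, by comparing the performance of admissible strategies started from different initial reserves and exploiting two structural facts: the controlled dynamics \eqref{control} are affine in the state and in the controls $(q,c)$, and the CRRA utility \eqref{assumption} is increasing and strictly concave. First I would prove that $V$ is nondecreasing. Fix $0\le x_1<x_2$ and $\epsilon>0$, choose an admissible strategy $\pi=(q_t,c_t)$ for $x_1$ whose performance is within $\epsilon$ of $V(x_1)$, and regard it as a fixed progressively measurable, state-independent control. Because the coefficients in \eqref{control} do not depend on the state, applying the very same control from $x_2$ produces the pathwise shift $X^{\pi,x_2}_t=X^{\pi,x_1}_t+(x_2-x_1)$, so the corresponding ruin time satisfies $\tau_{x_2}\ge\tau_{x_1}$ almost surely. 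Since $u\ge 0$ and both runs pay the same rate $c_t$, integrating over the longer horizon gives a performance from $x_2$ at least that from $x_1$; letting $\epsilon\downarrow 0$ yields $V(x_2)\ge V(x_1)$. Strict monotonicity on $(0,\infty)$ then follows because from $x_2$ one may additionally distribute the surplus $x_2-x_1$ at a small positive extra rate over a short initial period, producing strictly positive additional discounted utility.

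For concavity, fix $x_1,x_2\ge 0$ and $\lambda\in(0,1)$, set $x_\lambda=\lambda x_1+(1-\lambda)x_2$, and pick $\epsilon$-optimal admissible strategies $\pi_i=(q^i_t,c^i_t)$ for $x_i$, $i=1,2$, all driven by the same Brownian motion $B$. Define the combined control $\bar q_t=\lambda q^1_t+(1-\lambda)q^2_t$ and $\bar c_t=\lambda c^1_t+(1-\lambda)c^2_t$; since $[0,1]$ and $[0,\infty)$ are convex, $(\bar q_t,\bar c_t)$ is again admissible. The key computation is that, by the affine form of \eqref{control}, the reserve driven by $(\bar q_t,\bar c_t)$ from $x_\lambda$ coincides pathwise with $\lambda X^{\pi_1}_t+(1-\lambda)X^{\pi_2}_t$; in particular it is nonnegative for every $t<\tau^1\wedge\tau^2$, so its ruin time $\tau^\lambda$ satisfies $\tau^\lambda\ge\tau^1\wedge\tau^2$. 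Feeding the concavity inequality $u(\bar c_t)\ge\lambda u(c^1_t)+(1-\lambda)u(c^2_t)$ into \eqref{ORID-value-function} should then produce $V(x_\lambda)\ge\lambda V(x_1)+(1-\lambda)V(x_2)$ after letting $\epsilon\downarrow 0$.

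The main obstacle is precisely the bookkeeping of the ruin time: the combination argument naturally controls the payoff only up to $\tau^1\wedge\tau^2$, whereas recovering the full values $V(x_1)$ and $V(x_2)$ requires collecting discounted utilities up to $\tau^1$ and $\tau^2$ respectively. I expect this is where the real work lies, and I would resolve it by setting each dividend component to zero after its individual ruin time and arguing that the combined process survives at least until $\tau^1\vee\tau^2$; failing a clean pathwise bound, one can instead localize and use a dynamic-programming step, restarting at $\tau^1\wedge\tau^2$ from the still-positive surviving component and invoking the monotonicity already obtained. Finally, strict concavity follows from the strict concavity of $u$: whenever $x_1\ne x_2$ the associated near-optimal dividend rates $c^1$ and $c^2$ differ on a set of positive product measure, so the inequality $u(\bar c_t)>\lambda u(c^1_t)+(1-\lambda)u(c^2_t)$ is strict there, forcing $V(x_\lambda)>\lambda V(x_1)+(1-\lambda)V(x_2)$; ruling out the degenerate coincidence of the two optimizers is the one remaining point to verify.
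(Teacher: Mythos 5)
Your monotonicity argument is fine (it is the same state-shift comparison that the paper dismisses as obvious), but the concavity half has a genuine gap, located exactly where you yourself flag it. The pathwise identity $X^{\bar\pi,x_\lambda}_t=\lambda X^{\pi_1,x_1}_t+(1-\lambda)X^{\pi_2,x_2}_t$ controls the combined payoff only up to $\tau^1\wedge\tau^2$, and neither of your proposed repairs closes the hole in the non-cheap case $\eta>\theta$. Freezing the ruined component is impossible: the only way to kill the diffusion term $b(1-q_t)\,dB_t$ is $q=1$, which forces the strictly negative drift $(\theta-\eta)a<0$, while any $q<1$ leaves Brownian noise that immediately drives a component sitting at zero negative; so no admissible continuation guarantees pathwise that the combination survives to $\tau^1\vee\tau^2$. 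The dynamic-programming restart fails for a different reason: on $\{\tau^1<\tau^2\}$ the combined surplus at the restart is $(1-\lambda)X^{\pi_2}_{\tau^1}$, which is \emph{smaller} than $X^{\pi_2}_{\tau^1}$, so the monotonicity you invoke points the wrong way; what you actually need there is $V((1-\lambda)y)\ge(1-\lambda)V(y)$, and since $V(0)=0$ this superhomogeneity is equivalent to the concavity being proved --- the argument is circular. (Your route can be rescued in the cheap case $\eta=\theta$: there $q\equiv1$, $c\equiv0$ freezes the ruined component at zero, and the scaling $\tilde q_t=1-\kappa(1-q_t)$, $\tilde c_t=\kappa c_t$ even gives $V(\kappa y)\ge\kappa^p V(y)$ directly; but this scaling distorts the drift by $(1-\kappa)(\theta-\eta)a<0$ whenever $\eta>\theta$, which is precisely the case the lemma must cover.) Your strictness claim is also incomplete by your own admission: $\epsilon$-optimal dividend rates for distinct $x_1,x_2$ need not differ on a set of positive measure.

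The paper sidesteps this bookkeeping entirely by arguing on increments rather than on convex combinations of controls (following Choulli, Taksar and Zhou). Because the coefficients in \eqref{control} are state-independent, the class $\Pi^h$ of strategies whose cumulative increment process first reaches $-h$ at a stopping time $\zeta$ does not depend on the initial level, and the dynamic programming principle applied at $\zeta$ yields
\[
V(x)-V(x-h)=\max_{\pi\in\Pi^h}\mathbb{E}\Bigl[\int_0^{\zeta}e^{-\beta s}u(c_s)\,ds+\bigl(e^{-\beta\zeta}-1\bigr)V(x-h)\Bigr],
\]
where the only $x$-dependence on the right-hand side enters through the negative factor $e^{-\beta\zeta}-1$ multiplying the increasing function $V$; hence $V(x)-V(x-h)$ is decreasing in $x$, and decreasing increments give concavity without ever tracking whose ruin occurs first. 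To keep your approach you would have to either restrict to $\eta=\theta$ or supply an independent proof of superhomogeneity for $\eta>\theta$; absent that, the increment argument is the correct path.
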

\begin{proof}
It is obvious to see that $V$ is increasing. To prove the concavity we will follow the idea of the proof of \cite[Prop.2]{choulli2003diffusion}. 
From the dynamic programming
principle, we know $V(x)$ satisfies the following equality
\begin{align}
V(x)=\max_{\pi\in\Pi}\mathbb{E}_x\left[\int^{\tau_y}_{0} e^{-\beta s}u(c_s)d s+e^{-\beta \tau_y}V(y)\right],
\label{dpp_v}
\end{align}
where $0<y<x$ and
$$\tau_y=\inf\left\{t:X^{\pi}_t=y\right\}.$$
For $h>0$, let $\Pi^{h}$ be the set of strategies $\pi$
such that
\begin{align*}
\int^{\zeta}_0(\theta-\eta q_s)a d s+\int^{\zeta}_0 b(1-q_s)d B_s-\int^{\zeta}_0 c_s ds=-h
\end{align*}
on the set $\{\zeta<\infty\}$, where $\zeta$ is a stopping time defined by
\begin{align*}
\zeta=\inf\left\{t\geq0: \int^{t}_0(\theta-\eta q_s)a d s+\int^{t}_0 b(1-q_s)d B_s-\int^{t}_0 c_s ds=-h\right\}.
\end{align*}
By putting $h=x-y$ (hence $y=x-h$) from (\ref{dpp_v}) we obtain that
\begin{align}
V(x)=\max_{\pi\in\Pi^h}\mathbb{E}_x\left[\int^{\zeta}_{0} e^{-\beta s}u(c_s)d s+e^{-\beta \zeta}V(y)\right].
\label{dpp_v2}
\end{align}
Then
\begin{align}
V(x)-V(x-h)=\max_{\pi\in\Pi^h}\mathbb{E}_x\left[\int^{\zeta}_{0} e^{-\beta s}u(c_s)d s+(e^{-\beta \zeta}-1)V(x-h)\right].
\label{diff_v}
\end{align}
Since $e^{-\beta \zeta}<1$ and $V(x)$ is nondecreasing function of $x$, the right-hand side of (\ref{diff_v}) is a decreasing function of $x$. Thus $V(x)-V(x-h)$ is decreasing in $x$. Hence, $V{'}(x)$ is also decreasing and $V$ is strictly concave.
\end{proof}

As the analysis above, the idea to solve the original optimization problem is to first find an increasing, concave and nonnegative solution $v$ of HJB equation (\ref{ORID-HJB}) with the boundary condition (\ref{ORID-HJB-boundary}), and then construct strategies such that the derived function $v$ can be realized by them. Thus the reverse inequality of Lemma \ref{ORID-verif-lemma} holds for this candidate solution $v$. Hence, $v$ is indeed the value function we are looking for.

{\section{ Non-cheap reinsurance }\label{sec:noncheap}}
\indent
We will show that there exists a point $x^*$ such that
\begin{equation}\label{xstar}
q(x)=0\quad \mbox{for $x>x^*$.}\end{equation}
It implies that when the wealth of the insurance company is larger than $x^*$ then
the optimal strategy will be to not reinsure the arrive claims. In other words,
the insurance company can afford to cover all the claims by itself.
Therefore, we analyze the value function in two intervals $(0,x^*]$ and $(x^*, \infty)$, respectively.
Let
\begin{align}
B&=-\frac{2b^2}{\eta^2a^2}\frac{1-p}{1-\alpha+\alpha p}>0, \label{ORID-B}\\
D&=\frac{2b^2}{\alpha \eta^2 a}(\eta-\theta)>0. \label{ORID-D}
\end{align}
\begin{prop}\label{nocheap1}
Assume Assumption \ref{thm1a}. Then on $(0,x^*]$ the function $v$ solving \eqref{ORID-HJB} is given by
\begin{align}
v(x)=\frac{e^{-\xi}}{\beta}\left[-(\eta-\theta)a+\frac{\eta^2a^2}{2b^2}Be^{\frac{\xi}{1-p}}+
\frac{\eta^2a^2}{2b^2}D+\frac{1-p}{p}e^{\frac{\xi}{1-p}}\right],\label{ORID-value-function-noncheapa}
\end{align}
where $B$ and $D$ are given in (\ref{ORID-B}) and (\ref{ORID-D}) and
$\xi=g^{-1}(x)$, where $g^{-1}$ is the inverse of function $g$:
\begin{align}
g(\xi)=(1-p)Be^{\frac{\xi}{1-p}}+D\xi+Q_1, \label{ORID-fun:g}
\end{align}
for $Q_1=-(1-p)Be^{\frac{\xi_0}{1-p}}-D\xi_0$
and \begin{align}
\xi_0=(1-p)\ln \frac{(\eta-\theta)(\alpha-1-\alpha p)ap}{\alpha(1-p)^2}. \label{ORID-xi-0}
\end{align}
Above,
\begin{align}
x^*=g(\xi^*), \label{ORID-x^*}
\end{align}
where
\begin{align}\label{xistar}
\xi^*=(1-p)\ln\left(\frac{b^2-\eta a D}{\eta a B}\right).
\end{align}

\end{prop}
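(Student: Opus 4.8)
The plan is to substitute the unconstrained optimizers \eqref{ORID-rein-prop-rate} and \eqref{func} back into the HJB equation \eqref{ORID-HJB}, which is legitimate precisely on the region where the constraint $0\le q\le 1$ is inactive; this is the interval $(0,x^*]$ that the proposition isolates. A direct computation collapses the $\max$: with $c=(v')^{-1/(1-p)}$ the dividend terms reduce to $u(c)-c\,v'=\frac{1-p}{p}(v')^{-p/(1-p)}$, and the reinsurance terms combine to $(\theta-\eta)a\,v'-\frac{\eta^2a^2}{2b^2}(v')^2/v''$, so the equation becomes the nonlinear second-order ODE
$$\beta v=(\theta-\eta)a\,v'-\frac{\eta^2a^2}{2b^2}\frac{(v')^2}{v''}+\frac{1-p}{p}(v')^{-\frac{p}{1-p}}.$$

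The key idea is a change of variable that linearises this ODE. Since the earlier lemma guarantees the sought solution is strictly concave, $v'$ is strictly decreasing and invertible, so I introduce the parameter $\xi=-\ln v'(x)$, i.e. $v'(x)=e^{-\xi}$, and regard $x=x(\xi)$. Then $c(x)=e^{\xi/(1-p)}$, $v''=-e^{-\xi}/x'(\xi)$ and $(v')^2/v''=-e^{-\xi}x'(\xi)$, so the ODE becomes the relation $\beta v=e^{-\xi}\big[-(\eta-\theta)a+\frac{\eta^2a^2}{2b^2}x'(\xi)+\frac{1-p}{p}e^{\xi/(1-p)}\big]$. Differentiating this in $\xi$ and matching it against the chain-rule identity $dv/d\xi=v'(x)\,x'(\xi)=e^{-\xi}x'(\xi)$ eliminates $v$ and yields, for $w:=x'(\xi)$, the first-order linear ODE $w'=\alpha w-\frac{2b^2(\eta-\theta)}{\eta^2a}-\frac{2b^2}{\eta^2a^2}e^{\xi/(1-p)}$, with $\alpha$ as in \eqref{ORID-alpha}. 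Its particular solution supplies exactly the constants $D$ and $B$ of \eqref{ORID-D} and \eqref{ORID-B}, while the homogeneous part contributes a term $Ce^{\alpha\xi}$.

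Next I would pin down the free constants. The coefficient $C$ of the homogeneous mode must vanish: retaining it produces a contribution $\propto e^{(\alpha-1)\xi}$ to $\beta v$ that, by the admissibility principle recorded before the proposition (we keep only solutions capable of being the value function), together with the matching to the $q=0$ regime across $x^*$, cannot belong to $V$. With $C=0$, integrating $w$ gives $x=g(\xi)=(1-p)Be^{\xi/(1-p)}+D\xi+Q_1$ as in \eqref{ORID-fun:g}, and substituting $w$ back into the expression for $\beta v$ produces the claimed closed form \eqref{ORID-value-function-noncheapa}. The boundary condition $v(0)=0$ forces the bracket to vanish at the value $\xi_0$ corresponding to $x=0$; solving $-(\eta-\theta)a+\frac{\eta^2a^2}{2b^2}(Be^{\xi_0/(1-p)}+D)+\frac{1-p}{p}e^{\xi_0/(1-p)}=0$ and simplifying with the definitions of $\alpha,B,D$ (after cancelling the common factor $\alpha-1>0$) yields exactly \eqref{ORID-xi-0}; then $Q_1$ is fixed by $g(\xi_0)=0$.

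Finally, I would locate the free boundary through its defining property \eqref{xstar}. From \eqref{ORID-rein-prop-rate} and $v'/v''=-x'(\xi)=-(Be^{\xi/(1-p)}+D)$ one gets $q=1-\frac{\eta a}{b^2}(Be^{\xi/(1-p)}+D)$, so imposing $q=0$ translates into $Be^{\xi^*/(1-p)}+D=b^2/(\eta a)$, which gives \eqref{xistar} and hence $x^*=g(\xi^*)$ as in \eqref{ORID-x^*}. I expect the main obstacle to be the justification that $C=0$ is the correct selection among the two-parameter family of ODE solutions, i.e. identifying which member actually coincides with the value function, since the verification lemma only delivers $v\ge V$. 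The remaining steps — the algebra producing $\xi_0$, $\xi^*$, $B$ and $D$ — are routine once the change of variables and the linear ODE are set up; positivity of $B,D$ and the nonemptiness of $(0,x^*]$ (that is $\xi^*>\xi_0$) are where Assumption \ref{thm1a}, in particular $\alpha(1-p)>1$ and $(2-\alpha(1-p))\eta-2\theta<0$, enters.
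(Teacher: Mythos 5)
Your route is the paper's route almost verbatim: the same substitution of the unconstrained maximizers \eqref{ORID-rein-prop-rate} and \eqref{func} into \eqref{ORID-HJB}, the same parametrization $v'(g(\xi))=e^{-\xi}$ with $x=g(\xi)$, the same differentiation in $\xi$ to eliminate $v$ and obtain the first-order linear ODE for $g'$, the same identification of $B$, $D$ and of $\xi_0$ from $v(0)=0$ (your remark that a common factor $\alpha-1>0$ cancels is consistent with the paper's computation), and the same free-boundary condition $\frac{\eta a}{b^2}g'(\xi^*)=1$ yielding \eqref{xistar} and \eqref{ORID-x^*}. You also correctly locate where the third hypothesis of Assumption \ref{thm1a} enters: the paper shows $\frac{\eta a}{b^2}g'(\xi_0)=\frac{2(\eta-\theta)}{\alpha\eta(1-p)}<1$ is exactly $(2-\alpha(1-p))\eta-2\theta<0$, which together with strict monotonicity of $g'$ makes $\xi^*>\xi_0$ and keeps $0<q\le 1$ on $(\xi_0,\xi^*]$.

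The one step you leave open, the selection $C=0$ (the paper's $A=0$), is a genuine gap, and the mechanism you hint at would not close it. Matching to the $q\equiv 0$ regime across $x^*$ cannot by itself exclude $A>0$: for any $A\ge 0$ the function $g'(\xi)=Ae^{\alpha\xi}+Be^{\xi/(1-p)}+D$ is still strictly increasing, so $\frac{\eta a}{b^2}g'(\xi)=1$ still has a unique root and the outer problem \eqref{ORID-eq:v2}--\eqref{ORID-v2-boundary-2} could be posed at the corresponding boundary data; nothing in the pasting singles out $A=0$. The paper's actual argument is a growth comparison at infinity. First, concavity together with \eqref{trzy} forces $A\ge 0$ (since $\alpha>1/(1-p)$ under Assumption \ref{thm1a}, a negative $Ae^{\alpha\xi}$ would dominate and drive $g'=h$ negative). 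Second, if $A>0$ then $x=g(\xi)$ grows like $e^{\alpha\xi}$ and the closed form \eqref{ORID-value-function-noncheap} gives $v(x)\sim K_0x^{(\alpha-1)/\alpha}$ as $x\to\infty$ --- here the paper explicitly notes that relaxing the constraint on $q_t$ only increases the value function, which is what legitimizes extrapolating the candidate beyond $x^*$ to read off this asymptotic --- whereas the crude bound $V(x)\le \mathbb{E}\left[\int_0^\infty e^{-\beta s}u\left(x+(1+\theta)as+bB_s^+\right)ds\right]\le Kx^p$ holds, and $\alpha(1-p)>1$ is precisely $(\alpha-1)/\alpha>p$, a contradiction. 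So $\alpha(1-p)>1$ is not merely responsible for $B>0$ and $\xi^*>\xi_0$, as you suggest; it is the engine of the selection argument itself. With this argument inserted in place of your appeal to an ``admissibility principle,'' your proof coincides with the paper's, which in addition closes by verifying directly that the resulting $v$ is $C^2$, increasing, strictly concave, and satisfies the constrained HJB on $(0,x^*]$ with interior maximizers --- a check you assert in your opening paragraph but do not carry out.
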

\begin{proof}
Substituting \eqref{ORID-rein-prop-rate} and \eqref{func} into (\ref{ORID-HJB}) gives:
\begin{equation}
-(\eta-\theta)av'(x)-\frac{\eta^2a^2}{2b^2}\frac{(v'(x))^2}{v''(x)}+\frac{1-p}{p}(v'(x))^{-\frac{p}{1-p}}-\beta v(x)=0.
\label{ORID:HJB2}
\end{equation}
\par
We choose the following variable transform
\begin{align}
x=g(\xi),~~~~~v'(g(\xi))=e^{-\xi}. \label{transform}
\end{align}
Indeed, note that
\begin{equation}\label{trzy}
v''(g(\xi))=-\frac{e^{-\xi}}{g'(\xi)}.
\end{equation}
Plugging it into (\ref{ORID:HJB2}) produces
\begin{eqnarray}
-(\eta-\theta)ae^{-\xi}+\frac{\eta^2a^2}{2b^2}e^{-\xi}g'(\xi)+\frac{1-p}{p}e^{\frac{p}{1-p}\xi}-\beta v(g(\xi))
=0. \label{ORID-eq:g-xi}
\end{eqnarray}
Taking derivatives with $\xi$ and collecting terms, we derive
\begin{eqnarray}
\frac{\eta^2a^2}{2b^2}g''(\xi)-\left(\frac{\eta^2a^2}{2b^2}+\beta\right)g'(\xi)+(\eta-\theta)a+e^{\frac{\xi}{1-p}}=0.
\label{eq:g-xi-1}
\end{eqnarray}
Denote $h(\xi)=g'(\xi)$. Then the above equation can be rewritten as follows:
\begin{eqnarray}
h'(\xi)-\alpha h(\xi)+\frac{2b^2}{\eta^2a}(\eta-\theta)+\frac{2b^2}{\eta^2a^2}e^{\frac{\xi}{1-p}}=0
\label{ORID-eq}
\end{eqnarray}
for some $Q=h(0)$.
Solving above ordinary differential equation (ODE) we can observe that
\begin{eqnarray}
h(\xi)=Qe^{\alpha\xi}-\int^{\xi}_0 \frac{2b^2}{\eta^2 a}(\eta-\theta)e^{\alpha(\xi-u)}du
-\int^\xi_0 \frac{2b^2}{\eta^2a^2}e^{\alpha \xi}e^{(\frac{1}{1-p}-\alpha)u}du.
\end{eqnarray}
\par
Careful calculation and rearranging all terms give
\begin{eqnarray}
h(\xi)=A e^{\alpha\xi}+Be^{\frac{\xi}{1-p}}+D \nonumber
\end{eqnarray}
for
\begin{align}
A&=Q-\frac{2b^2}{\alpha \eta^2 a}(\eta-\theta)+\frac{2b^2}{\eta^2 a^2}\frac{1-p}{1-\alpha+\alpha p}. \nonumber
\end{align}
In view of \eqref{trzy} and the concavity of $v$, the value of $A$ should be nonnegative. Then
\begin{align}
h(\xi)=Ae^{\alpha \xi}+Be^{\frac{\xi}{1-p}}+D>0. \label{h}
\end{align}
Recall that $h(\xi)=g'(\xi)$  and \eqref{ORID-eq:g-xi}, we get
\begin{align}
v(x)&=\frac{e^{-\xi}}{\beta}\left[-(\eta-\theta)a+\frac{\eta^2a^2}{2b^2}g'(\xi)+\frac{1-p}{p}e^{\frac{\xi}{1-p}}\right]
\nonumber\\
&=\frac{e^{-\xi}}{\beta}\left[-(\eta-\theta)a+\frac{\eta^2a^2}{2b^2}Ae^{\alpha\xi}
+\frac{\eta^2a^2}{2b^2}Be^{\frac{\xi}{1-p}}+\frac{\eta^2a^2}{2b^2}D+\frac{1-p}{p}e^{\frac{\xi}{1-p}}\right].
\label{ORID-value-function-noncheap}
\end{align}
Note that if we allow a general values of $q_t$ without constrains then
we only increase the value function giving us the possibility of considering the limit as $x\to\infty$.
From \eqref{transform}, \eqref{h} and \eqref{ORID-value-function-noncheap}, if $A$ is strictly positive, we get
\begin{align*}
v(x)\sim K_0 x^{\frac{\alpha-1}{\alpha}},
\end{align*}
where $K_0$ is some fixed constant and $a(x) \sim b(x)$ means that $\lim\limits_{x\rightarrow \infty}a(x)/b(x)=1$.
However, the following upper bound
$$V(x)\leq \mathbb{E}\left[\int_0^{\infty} e^{-\beta s}u(x+(\theta+1)as+bB_s^+)\;d s\right]$$
shows that there exists constant $K>0$ such that
\begin{equation}\label{upperunform}
V(x)\leq Kx^p.
\end{equation}
Therefore, if we want to have $v(x)=V(x)$, by Assumption \ref{thm1a} we must choose $A$ to be equal zero.
Thus we obtain the representation of the function $v$ given in \eqref{ORID-value-function-noncheapa}.
\par
In the following, we will derive the value of $\xi_0$ which is given in \eqref{ORID-xi-0} and satisfies $g(\xi_0)=0$. Indeed, if we substitute $\xi_0$ into (\ref{ORID-value-function-noncheap}) recalling that $v(0)=0$, then:
\begin{align*}
-(\eta-\theta)a+\frac{\eta^2a^2}{2b^2}g'(\xi_0)+\frac{1-p}{p}e^{\frac{\xi_0}{1-p}}=0.
\end{align*}
Since $g'(\xi_0)=Be^{\frac{\xi_0}{1-p}}+D$ we obtain expression \eqref{ORID-xi-0}.
Note that $h(\xi)=g'(\xi)$ and $A=0$ we get
\eqref{ORID-fun:g} from \eqref{h}.

We will prove that $0<q(x)\leq 1$ as it should be from the construction of the reinsurance policy.
Substituting $v'(x)=e^{-\xi}$ and $v''(x)=-e^{-\xi}/{g'(\xi)}$ into (\ref{ORID-rein-prop-rate}), we get
\begin{align*}
q(x)=1-\frac{\eta a}{b^2}g'(\xi)\leq 1.
\end{align*}
To satisfy required condition
$q(x)>0$ we need to have the following inequality
\begin{align*}
\frac{\eta a}{b^2}g'(\xi)<1
\end{align*}
satisfied.
We will show that $\xi^*$ given in \eqref{xistar} is the unique solution of the equation:
\begin{align*}
\frac{\eta a}{b^2}g'(\xi)=1.
\end{align*}
It is easily verified that $\xi^*$ solves the above equation.
Moreover, since
\begin{align*}
g''(\xi)=\frac{B}{1-p}e^{\frac{\xi}{1-p}}>0,
\end{align*}
the function $g'(\xi)$ is strictly increasing. Hence, $\xi^*$ solves above equation uniquely if
\begin{align}\label{dodwar}
\frac{\eta a}{b^2}g'(\xi_0)<1.
\end{align}
Plugging (\ref{ORID-xi-0}) into the left side of inequality \eqref{dodwar} we can conclude that
\begin{align*}
\frac{\eta a}{b^2}g'(\xi_0)&=\frac{2}{\eta a}\left[-\frac{1-p}{1-\alpha+\alpha p}e^{\frac{\xi_0}{1-p}}+\frac{(\eta-\theta)a}{\alpha}\right] \nonumber\\
&=\frac{2}{\eta a}\left[-\frac{1-p}{1-\alpha+
\alpha p}\frac{(\eta-\theta)(\alpha-1-\alpha p)ap}{\alpha (1-p)^2}
+\frac{(\eta-\theta)a}{\alpha}\right]\nonumber\\
&=\frac{2(\eta-\theta)}{\alpha\eta(1-p)}. \nonumber
\end{align*}
Therefore, if $2(\eta-\theta)/(\alpha\eta(1-p))<1$, that is $(2-\alpha(1-p))\eta-2\theta<0$, then
\eqref{dodwar} holds true.

Finally, we will check that $v$ is indeed a $C^2$ function solving the HJB equation \eqref{ORID-HJB} on $(0, x^*].$
From \eqref{ORID-value-function-noncheap} we have
\begin{align*}
v^{\prime}(x)
&=v^{\prime}(\xi)\cdot \frac{1}{g^{\prime}(\xi)}\\
&=\left\{-\frac{e^{-\xi}}{\beta}\left[-(\eta-\theta)a
+\frac{\eta^2a^2}{2b^2}g^{\prime}(\xi)+\frac{1-p}{p}e^{\frac{\xi}{1-p}}\right]
+\frac{e^{-\xi}}{\beta}\left[\frac{\eta^2a^2}{2b^2}g^{\prime\prime}(\xi)+\frac{1}{p}e^{\frac{\xi}{1-p}}\right]
\right\}\cdot \frac{1}{g^{\prime}(\xi)}\\
&=\frac{e^{-\xi}}{\beta g^{\prime}(\xi)}\left\{
(\eta-\theta)a-\frac{\eta^2a^2}{2b^2}g^{\prime}(\xi)+e^{\frac{\xi}{1-p}}
+\frac{\eta^2a^2}{2b^2}g^{\prime\prime}(\xi)\right\}\\
&=e^{-\xi},
\end{align*}
where the last equality is obtained by \eqref{eq:g-xi-1},
and
\begin{align*}
v^{\prime\prime}(x)=-\frac{e^{-\xi}}{g^{\prime}(\xi)}<0.
\end{align*}
Therefore, $v$ is increasing and strictly concave.
Substituting the values of $v^{\prime}$ and $v^{\prime\prime}$ into the left hand side of HJB equation \eqref{ORID-HJB}, we have
\begin{align}
&\max_{0\leq q\leq1,c\geq 0}\bigg\{(\theta-\eta q)a v{'}(x)
 +\frac{1}{2}b^2(1-q)^2v{''}(x)-c v{'}(x)+u(c)-\beta v(x)
 \bigg\} \nonumber\\
&=\max_{0\leq q\leq1,c\geq 0}\bigg\{(\theta-\eta q)a e^{-\xi}
 -\frac{1}{2}b^2(1-q)^2\frac{e^{-\xi}}{g^{\prime}(\xi)}-c e^{-\xi}+u(c)-\beta v(x)
 \bigg\}. \label{HJBlno}
\end{align}
Since $u(\cdot)$ is an increasing and concave function and $0<g^{\prime}(\xi)\leq {b^2}/{\eta a}$ for $\xi_0<\xi\leq\xi^*$,
therefore the maximizers $q^*$ and $c^*$ are given by
\begin{align*}
q^*(x)&=1-\frac{\eta a}{b^2}g^{\prime}(\xi), \\
c^*(x)&=e^{\frac{\xi}{1-p}}
\end{align*}
with $x=g(\xi)$.
Hence, the equation \eqref{HJBlno} can be rewritten as follows:
\begin{align}
-(\eta-\theta)a e^{-\xi}+\frac{\eta^2a^2}{2b^2}e^{-\xi}g^{\prime}(\xi)+\frac{1-p}{p}e^{\frac{p\xi}{1-p}}-\beta v(x)=0. \label{HJBlno1}
\end{align}
The verification of \eqref{HJBlno1} is straightforward involving \eqref{ORID-value-function-noncheapa} and \eqref{ORID-fun:g}.
\end{proof}
\begin{remark}\rm
To solve the equation \eqref{ORID:HJB2} we can also use the Legendre transform method. That is, we define
\begin{align}
\hat{v}(z)=\max_{x>0}\{v(x)-xz\} \label{le-tra-hv}
\end{align}
where the maximizing value of $x$ in \eqref{le-tra-hv} is
the inverse function $I(z)$ of $v'$.
Then $v(x)$ can be recovered by
 \begin{align}
v(x)=\min_{z>0}\{\hat{v}(z)+xz\}  \label{le-tra-v}
\end{align}
and the minimizing value of $z$ in \eqref{le-tra-v} equals $v'(x)$.
Substituting $x=I(z)$ into \eqref{ORID:HJB2} produces:
\begin{align}
\frac{\eta^2a^2}{2b^2}z^2\hat{v}{''}(z)+\beta z\hat{v}{'}(z)-\beta\hat{v}(z)+\frac{1-p}{p}z^{-\frac{p}{1-p}}
-(\eta-\theta)az=0. \label{eq:le-tra}
 \end{align}
 Taking derivatives on both sides of \eqref{eq:le-tra} gives:
 \begin{align}
 \frac{\eta^2a^2}{2b^2}z^2\hat{v}{'''}(z)+\left(\frac{\eta^2a^2}{b^2}+\beta\right)z\hat{v}{''}(z)-
 z^{-\frac{1}{1-p}}-(\eta-\theta)a=0. \label{le-tr-de}
 \end{align}
 Denote $h_1(z)=\hat{v}{'}(z), z=e^{-t}$. Then \eqref{le-tr-de} can be rewritten as
 \begin{align}
 \frac{\eta^2a^2}{2b^2}h^{\prime\prime}_1(t)-\left(\frac{\eta^2a^2}{2b^2}+\beta\right)h^{\prime}_1(t)
 -e^{\frac{t}{1-p}}-(\eta-\theta)a=0.
 \label{le-tr-1}
 \end{align}
 By substituting $h_2(t)=h^{\prime}_1(t)$ into \eqref{le-tr-1} we get
 \begin{align}
 h^{\prime}_2(t)-\alpha h_2(t)-\frac{2b^2(\eta-\theta)}{\eta^2a}-\frac{2b^2}{\eta^2a^2}e^{\frac{t}{1-p}}=0.
 \label{le-tr-2}
 \end{align}
 The solution of \eqref{le-tr-2} is then given by:
 \begin{align}
 h_2(t)=-Ae^{\alpha t}-Be^{\frac{t}{1-p}}-D.
 \end{align}
Moreover, from \eqref{ORID:HJB2} and the definition of the Legendre transform, we can find that
\begin{align*}
\beta v(x)&=-(\eta-\theta)az+\frac{\eta^2a^2}{2b^2}z^2\hat{v}{''}(z)+\frac{1-p}{p}z^{-\frac{p}{1-p}}
\\
&=-(\eta-\theta)ae^{-t}-\frac{\eta^2a^2}{2b^2}e^{-t}h_2(t)+\frac{1-p}{p}e^{\frac{pt}{1-p}},
\end{align*}
which coincides with \eqref{ORID-value-function-noncheap}.
\end{remark}

We will now consider the value function $v$ on the interval $(x^*,\infty)$ on which $q^*(x)\equiv 0$.
It is easy to verify that under the assumption that $q(x)=0$ on $(x^*,\infty)$ the HJB equation is equivalent to the following equation:
\begin{align}
\theta a v'(x)+\frac{1}{2}b^2v''(x)+\frac{1-p}{p}(v'(x))^{-\frac{p}{1-p}}-\beta v(x)=0 \label{ORID-eq:v2}
\end{align}
with the boundary conditions
\begin{align}
v'(x^*)&=e^{-\xi^*},  \label{ORID-v2-boundary-1}
\\
\frac{v'(x^*)}{v''(x^*)}&=-\frac{b^2}{\eta a}. \label{ORID-v2-boundary-2}
\end{align}
It remains now to verify that if $v$ solves the equation (\ref{ORID-eq:v2}), then the optimal reinsurance proportion indeed equals zero.
\begin{prop}\label{ORID-prop:no-rei}
Assume that Assumption \ref{thm1a} holds. Suppose also that $v$ on $(x^*,\infty)$ solves (\ref{ORID-eq:v2})-(\ref{ORID-v2-boundary-2}). Then
$v'(x)>0$ and  $v''(x)<0$ on $(x^*,\infty)$ and the maximum in
\begin{align}
\max_{0\leq q\leq 1}\left\{(\theta-\eta q)a v'(x)+\frac{1}{2}b^2(1-q)^2v''(x)\right\}=0 \label{ORID-eq:rein}
\end{align}
is attained for $q=0$ in this case.
\end{prop}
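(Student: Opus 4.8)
The statement has two parts, the sign information $v'>0$, $v''<0$ and the identification of the maximiser, and I would handle both through the single quantity $m(x):=-v''(x)/v'(x)$. Once $v'>0$ and $v''<0$, the bracket in \eqref{ORID-eq:rein} is strictly concave in $q$ (its $q^2$-coefficient is $\tfrac12 b^2v''<0$), so its maximiser over $[0,1]$ is the left endpoint $q=0$ if and only if the unconstrained stationary point \eqref{ORID-rein-prop-rate}, which equals $1-\frac{\eta a}{b^2 m}$, is $\le 0$; for $m>0$ this is exactly $m\le \eta a/b^2$. Thus the whole proposition reduces to showing $v'>0$ and $0<m(x)\le \eta a/b^2$ for all $x>x^*$, with the boundary value $m(x^*)=\eta a/b^2$ coming from \eqref{ORID-v2-boundary-2}. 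Granting this, inserting $q=0$ into the bracket and adding the already-maximised dividend part $\frac{1-p}{p}(v')^{-p/(1-p)}-\beta v$ returns \eqref{ORID-eq:v2}, so the full HJB holds with value $0$.

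The engine is a first-order (Riccati) equation for $m$. Differentiating \eqref{ORID-eq:v2} and writing $\phi:=(v')^{-1/(1-p)}>0$ gives $\tfrac12 b^2v'''=(\phi-\theta a)v''+\beta v'$, and substituting $v''=-mv'$, $v'''=v'(m^2-m')$ yields $m'=m^2+\frac{2}{b^2}(\phi-\theta a)m-\frac{2\beta}{b^2}$. I then read off the behaviour at the two relevant levels. At $m=0$ one gets $m'=-2\beta/b^2<0$, so the line $m=0$ can only be crossed downward. At $m=\eta a/b^2$ a short computation gives $m'=\frac{2\eta a}{b^4}(\phi-\phi_{\max})$ with $\phi_{\max}:=\frac{\beta b^2}{\eta a}-\frac{\eta a}{2}+\theta a$, so this upper level is crossed downward precisely when $\phi<\phi_{\max}$. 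At $x^*$ the boundary data give $v'(x^*)=e^{-\xi^*}>0$, $v''(x^*)<0$, and $\phi(x^*)=e^{\xi^*/(1-p)}=(b^2-\eta aD)/(\eta aB)$ from \eqref{xistar}, \eqref{ORID-B}, \eqref{ORID-D}; substituting these together with $\alpha-1=2\beta b^2/(\eta^2a^2)$, the ratio $\phi(x^*)/\phi_{\max}$ collapses to $1-\frac{1}{\alpha(1-p)}<1$, so $\phi(x^*)<\phi_{\max}$ under Assumption \ref{thm1a}. Hence $m'(x^*)<0$ and $m$ descends from $\eta a/b^2$ into the strip $(0,\eta a/b^2)$ immediately.

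The \emph{main obstacle} is to keep $m$ inside this strip for all $x>x^*$. As $x\to\infty$, $v''<0$ forces $v'$ to decrease, so $\phi=(v')^{-1/(1-p)}\uparrow\infty$, and once $\phi>\phi_{\max}$ the upper barrier $m=\eta a/b^2$ is no longer repelling, so the elementary sign argument of the previous paragraph breaks down on an intermediate range. I would close this gap using the a priori growth bound \eqref{upperunform}, $V(x)\le Kx^p$: matching it forces the relevant solution to satisfy $v(x)\sim Kx^p$, whence $m\sim(1-p)/x\to0^+$ and the product $\phi\, m$ stays bounded, so $m\in(0,\eta a/b^2)$ for all large $x$ and the Riccati dynamics remain balanced. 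Combined with the two barrier signs (the line $m=0$ only crossed downward, forcing $v''<0$, and $m$ pinned strictly below $\eta a/b^2$ near both ends), a minimum–maximum principle on the intervening compact interval then confines $m$ to $(0,\eta a/b^2]$ throughout, which gives $v'>0$, $v''<0$ and the maximiser $q=0$. The genuinely delicate point is exactly this middle regime: the growth of $\phi$ must be played off against the decay of $m$ so as to certify that neither barrier is ever crossed, and this is where the asymptotic matching to $Kx^p$ does the essential work.
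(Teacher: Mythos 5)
Your reduction of the proposition to confinement of $m(x)=-v''(x)/v'(x)$ in the strip $(0,\eta a/b^2]$ is a sound scheme, and your computations check out: the Riccati equation $m'=m^2+\frac{2}{b^2}(\phi-\theta a)m-\frac{2\beta}{b^2}$ with $\phi=(v')^{-1/(1-p)}$ is correct, as is the boundary ratio $\phi(x^*)/\phi_{\max}=1-\frac{1}{\alpha(1-p)}<1$, which is in fact a disguised form of the paper's computation that $q'(v^*)<\beta$ (resting on $\eta\alpha-2(\eta-\theta)>\alpha p\eta>0$). Your route does differ from the paper's: the paper substitutes $y(v)=v'(x)$, turns \eqref{ORID-eq:v2} into the first-order equation \eqref{ORID-eq:y-ch}, and imports Theorem 3 of \citet{hubalek2004optimizing}, which asserts that this equation has precisely one decreasing \emph{convex} solution; the convexity $y''>0$ is then the decisive ingredient in a contradiction argument at the first point $\tilde v$ where $y'(\tilde v)=-\eta a/b^2$ would occur (differentiating the ODE there forces $y''(\tilde v)<0$). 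The paper also gives a separate, self-contained contradiction argument for $v'>0$, which your proposal only gets implicitly via global finiteness of $m$ --- itself part of what must be proven.

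The genuine gap is in your middle regime, and it is twofold. First, \eqref{upperunform} is a probabilistic bound on the value function $V$; it says nothing a priori about the solution $v$ of the boundary-value problem \eqref{ORID-eq:v2}--\eqref{ORID-v2-boundary-2}, because the identification $v=V$ is established only afterwards (Theorem \ref{thm1}) by a verification argument that \emph{uses} the present proposition --- invoking it here is circular. Second, even granting $v(x)\sim Kx^p$, one cannot differentiate an asymptotic equivalence: $m\sim(1-p)/x$ needs the asymptotics of $v'$ and $v''$, and controlling those is exactly the analytic content the paper outsources to \citet{hubalek2004optimizing} (whose Theorem 3 underlies Theorem \ref{th:asyno}). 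Without an independent proof that $m\to 0^+$, your trapping argument does not launch: at $m=0$ the flow satisfies $m'=-2\beta/b^2<0$ irrespective of $\phi$, so the lower edge is \emph{not} repelling and ``the line $m=0$ only crossed downward, forcing $v''<0$'' has the logic inverted --- a downward crossing is loss of concavity, and the one-way crossing only helps once positivity of $m$ near infinity is already known. Similarly, nothing yet excludes $m$ escaping upward through $\eta a/b^2$ once $\phi>\phi_{\max}$; since $\phi'>0$ while $m>0$, such a crossing would indeed be irreversible and contradict $m\to0^+$, but again only if that end behavior is established. To make the Riccati route rigorous you must either prove the derivative asymptotics directly or reprove the convexity statement of Theorem 3 of \citet{hubalek2004optimizing}; as written, the proposal assumes the conclusion precisely in its hardest regime.
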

\begin{proof}
We first show that $v'(x)>0$. Assume a contrario that there exists a point
$\breve{x}$ on $(x^*,\infty)$ such that $v'(\breve{x})<0$. This implies that we can find a point $\ddot{x}, x^*<\ddot{x}<\breve{x}$
satisfying $v'(\ddot{x})=0$. Then
\begin{align*}
v''(\ddot{x}-)=\lim_{x \uparrow\ddot{x}}{\frac{v'(x)-v'(\ddot{x})}{x-\ddot{x}}}<0.
\end{align*}
Substituting $\ddot{x}$ into equation (\ref{ORID-eq:v2}) gives:
\begin{align*}
\frac{1}{2}b^2v''(\ddot{x}-)=\beta v(\ddot{x})>0
\end{align*}
which is a contradiction. Hence $v$ is indeed an increasing function.
\par Function $v$ is also concave, that is, $v''(x)<0$.
Indeed, denote
\begin{align*}
y(v)=v'(x)>0.
\end{align*}
Then $v''(x)=y(v)y'(v)$ and
plugging it into (\ref{ORID-eq:v2}) produces
\begin{align}
\theta ay(v) +\frac{1}{2}b^2y(v)y'(v)+\frac{1-p}{p}y(v)^{-\frac{p}{1-p}}-\beta v=0 \label{ORID-eq:y}
\end{align}
with the boundary conditions $y'(v^*)=-\eta a/b^2<0$ and $y(v^*)=e^{-\xi^*}$.
Moreover, the equation (\ref{ORID-eq:y}) can be rewritten as follows:
\begin{align}
y'(v)=\frac{2}{b^2}\left(\beta \frac{v}{y(v)}-\theta a-\frac{1-p}{p}y(v)^{-\frac{1}{1-p}}\right). \label{ORID-eq:y-ch}
\end{align}
According to Theorem 3 of \citet{hubalek2004optimizing}, the differential equation (\ref{ORID-eq:y-ch}) has precisely one decreasing convex solution.
Therefore,
\begin{equation}\label{once}
y'(v)<0 \quad\mbox{and}\quad y''(v)>0.
\end{equation}
Hence, $v''(x)<0$.
\par Now we will show that (\ref{ORID-eq:rein}) attaines maximum at zero, that is
\begin{align}\label{counterpart}
-\frac{\eta a}{b^2} \frac{v'(x)}{v''(x)}>1
\end{align}
or that
\begin{align}
y'(v)>-\frac{\eta a}{b^2}. \label{ORID-ineq:y}
\end{align}
\par Denote
\begin{align*}
q(v)=\left(\theta-\frac{\eta}{2}\right)ay(v)+\frac{1-p}{p}y(v)^{-\frac{p}{1-p}}.
\end{align*}
Since $y(v)$ satisfies equation (\ref{ORID-eq:y}), thus $q(v)$ can be rewritten as:
\begin{align*}
q(v)=\beta v-\frac{\eta a}{2} y(v)-\frac{1}{2}b^2y(v)y'(v).
\end{align*}
Therefore demonstrating (\ref{ORID-ineq:y}) is equivalent to showing that
\begin{align}
q(v)<\beta v\quad\mbox{for $v>v^*$.}\label{demonstr}
\end{align}
Due to the boundary condition $y'(v^*)=-\eta a/b^2$, we can observe that $q(v^*)=\beta v^*$.
We also claim that \begin{equation}\label{qstar}q'(v^*)<\beta.\end{equation} Indeed,
\begin{align*}
q'(v^*)&=\left(\theta-\frac{\eta}{2}\right)ay'(v^*)-y(v^*)^{-\frac{1}{1-p}}y'(v^*) \\
&=-\frac{\eta a^2}{b^2}\left(\theta-\frac{\eta}{2}\right)+\frac{\eta a}{b^2}e^{\frac{\xi^*}{1-p}}\\
&=-\frac{\eta a^2}{b^2}\left(\theta-\frac{\eta}{2}\right)+\frac{\eta a}{b^2}\frac{1}{B}\left(\frac{b^2}{\eta a}-D\right)\\
&=\beta-\frac{\eta a^2}{2 \alpha b^2(1-p)}(\eta \alpha-2(\eta-\theta)),
\end{align*}
where the second equality is obtained by substituting the values of $y(v^*)$ and $y'(v^*)$. The third equality follows from the form of $\xi^*$ and
the last equality is obtained by plugging the values of $B, D$ and $\alpha$.
Now the inequality $\eta \alpha-2(\eta-\theta)>\alpha p\eta>0$ implies \eqref{qstar}.
\par To prove \eqref{demonstr}, suppose a contrario that there exists $\tilde{v}\in (v^*, \infty)$ such that $q(\tilde{v})=\beta \tilde{v}$, that is $y'(\tilde{v})=-\eta a/b^2$,
and that $q(v)<\beta v$ on $(v^*, \tilde{v})$.
Since $q(\tilde{v})=\beta \tilde{v}$ and $q(v)<\beta v$ on $(v^*, \tilde{v})$, we have $q'(\tilde{v})>\beta$.
This implies that
\begin{align}
\left(\theta-\frac{\eta}{2}\right)ay'(\tilde{v})-y(\tilde{v})^{-\frac{1}{1-p}}y'(\tilde{v})>\beta. \label{ORID-eq:com-q}
\end{align}
On the other hand, by taking derivatives with $v$ on both sides of (\ref{ORID-eq:y}) and by substituting the value of $\tilde{v}$, we obtain
\begin{align*}
&\theta a y'(\tilde{v})+\frac{1}{2}b^2(y'(\tilde{v}))^2+\frac{1}{2}b^2y(\tilde{v})y''(\tilde{v})-
y(\tilde{v})^{-\frac{1}{1-p}}y'(\tilde{v})-\beta=0 \\
\end{align*}
and therefore
\begin{align*}
&\theta a y'(\tilde{v})-
y(\tilde{v})^{-\frac{1}{1-p}}y'(\tilde{v})=\beta-\frac{1}{2}b^2(y'(\tilde{v}))^2-\frac{1}{2}b^2y(\tilde{v})y''(\tilde{v}).
\end{align*}
Now, substituting above identity into (\ref{ORID-eq:com-q}) gives
\begin{align*}
y''(\tilde{v})<0
\end{align*}
because $y'(\tilde{v})=-\eta a/b^2$.
This is a contradiction with the fact that by \eqref{once} $y(v)$ is a decreasing convex function.
This completes the proof.
\end{proof}
The equation (\ref{ORID-eq:v2}) satisfied by $v$ for $x>x^*$ is similar to equation (15) in  \citet{hubalek2004optimizing}.
Therefore, we can identify the asymptotic solution of the value function and the corresponding optimal strategies
as it was also done in \citet{hubalek2004optimizing}.
\begin{theorem}\label{th:asyno}
The asymptotic behaviors of $v(x), c(x), q(x)$, as $x\rightarrow \infty$ , are given by:
\begin{align*}
&v(x)\sim \left(\frac{1-p}{\beta}\right)^{1-p}\frac{x^p}{p}, \\
&c(x)\sim  \frac{\beta}{1-p}x, \\
&q(x)\equiv 0.
\end{align*}
\end{theorem}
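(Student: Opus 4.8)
The plan is to exploit the fact that the three asymptotics decouple. The claim $q(x)\equiv 0$ is immediate: by Proposition \ref{ORID-prop:no-rei} the optimal reinsurance proportion vanishes on the whole half-line $(x^*,\infty)$, so in particular it does so as $x\to\infty$. It therefore remains to pin down the leading behaviour of $v$ and of $c$, and for this I would work with the ODE \eqref{ORID-eq:v2} that governs $v$ on $(x^*,\infty)$.

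First I would determine the leading power law for $v$ by a dominant-balance argument. Substituting the ansatz $v(x)\sim Kx^p$ into \eqref{ORID-eq:v2} one sees that the drift term $\theta a v'(x)=O(x^{p-1})$ and the diffusion term $\tfrac12 b^2 v''(x)=O(x^{p-2})$ are of strictly lower order than the two remaining terms $\tfrac{1-p}{p}(v'(x))^{-p/(1-p)}$ and $\beta v(x)$, both of which are $O(x^p)$ (here one uses the identity $-(p-1)\tfrac{p}{1-p}=p$). Balancing the leading order $x^p$ then forces
\begin{align*}
\frac{1-p}{p}\,(Kp)^{-\frac{p}{1-p}}=\beta K,
\end{align*}
and solving this algebraic equation gives $K=\tfrac{1}{p}\left(\tfrac{1-p}{\beta}\right)^{1-p}$, which is exactly the constant in the claimed expansion $v(x)\sim \left(\tfrac{1-p}{\beta}\right)^{1-p}\tfrac{x^p}{p}$.

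To upgrade this heuristic into a genuine asymptotic equivalence I would pass to the reduced first-order equation \eqref{ORID-eq:y-ch} for $y(v)=v'(x)$, which is precisely of the type studied in \citet{hubalek2004optimizing}; their Theorem 3 already guarantees that \eqref{ORID-eq:y-ch} has a unique decreasing convex solution and delivers its exact asymptotics as $v\to\infty$. Reading off that asymptotics, translating back through the inverse of $v'$, and integrating yields $v(x)\sim Kx^p$ together with its differentiated form $v'(x)\sim Kp\,x^{p-1}$. The asymptotics of the dividend rate then follows from \eqref{func}: since $c(x)=(v'(x))^{-1/(1-p)}$,
\begin{align*}
c(x)\sim (Kp)^{-\frac{1}{1-p}}x=\frac{\beta}{1-p}\,x,
\end{align*}
using that $(Kp)^{-1/(1-p)}=\left(\tfrac{1-p}{\beta}\right)^{-1}$.

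The routine part is the algebra above; the substantive point, and the step I expect to be the main obstacle, is justifying that the relation $v(x)\sim Kx^p$ may be differentiated to yield $v'(x)\sim Kp\,x^{p-1}$, since a power-law asymptotic for a function does not in general transfer to its derivative. This is exactly what the monotonicity and strict concavity of $v$ established earlier, together with the Hubalek and Schachermayer asymptotic analysis of the reduced equation \eqref{ORID-eq:y-ch}, are needed for: concretely I would verify that \eqref{ORID-eq:y-ch} meets the hypotheses of their Theorem 3 and quote the resulting asymptotic for $y=v'$ directly, from which both the $v$ and the $c$ expansions follow without any further differentiation.
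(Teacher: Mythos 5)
Your proposal is correct and takes essentially the same route as the paper: the paper gives no proof of Theorem \ref{th:asyno} beyond remarking that \eqref{ORID-eq:v2} coincides with equation (15) of \citet{hubalek2004optimizing} and deferring the asymptotics to their analysis of the reduced equation \eqref{ORID-eq:y-ch}, which is precisely the deferral you make. Your dominant-balance computation of the constants, and the appeal to Proposition \ref{ORID-prop:no-rei} for $q(x)\equiv 0$ on $(x^*,\infty)$, merely make explicit what the paper leaves implicit in that citation.
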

Summarizing Propositions \ref{nocheap1}, \ref{ORID-prop:no-rei} and Theorem \ref{th:asyno}
gives the first main result of this paper.
\begin{theorem}\label{thm1}
Assume Assumption \ref{thm1a}. Then the value function $V(x)$ is given by
\begin{equation}\label{rhs1}
V(x)=
\begin{cases}
\frac{e^{-\xi}}{\beta}\left[-(\eta-\theta)a+\frac{\eta^2a^2}{2b^2}Be^{\frac{\xi}{1-p}}+
\frac{\eta^2a^2}{2b^2}D+\frac{1-p}{p}e^{\frac{\xi}{1-p}}\right]
,&~~~~0<x\leq {x^*},
\\v(x)
,&~~~~x>x^*,
\end{cases}
\end{equation}
with $B$ and $D$ given in (\ref{ORID-B}) and (\ref{ORID-D}) and
$x=g(\xi)$, where $g$ is defined in (\ref{ORID-fun:g}). Moreover,
$x^*$ is given in (\ref{ORID-x^*})
and $v(x)$ solves (\ref{ORID-eq:v2})-(\ref{ORID-v2-boundary-2}).
The corresponding optimal dividend strategy is
\begin{eqnarray}
c(x)=(V'(x))^{-\frac{1}{1-p}}\nonumber
\end{eqnarray}
and the optimal reinsurance proportion is
\begin{equation}
q(x)=
\begin{cases}
1-\frac{\eta a}{b^2}g'(\xi)
,&~~~~0<x\leq {x^*},
\\0
,&~~~~x>x^*.
\end{cases}\nonumber
\end{equation}
\end{theorem}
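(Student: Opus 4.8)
The plan is to assemble the piecewise candidate $v$ supplied by Propositions \ref{nocheap1} and \ref{ORID-prop:no-rei}, verify that it is a genuine $C^2$ solution of the HJB equation \eqref{ORID-HJB}--\eqref{ORID-HJB-boundary} on all of $(0,\infty)$, and then close the gap between the two inequalities furnished by Lemma \ref{ORID-verif-lemma} and by the associated feedback strategy. Concretely, I would define $v$ to be the right-hand side of \eqref{rhs1}: on $(0,x^*]$ the explicit expression \eqref{ORID-value-function-noncheapa} written through the change of variables \eqref{transform}, and on $(x^*,\infty)$ the solution of \eqref{ORID-eq:v2} with boundary data \eqref{ORID-v2-boundary-1}--\eqref{ORID-v2-boundary-2}. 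The theorem then reduces to checking that this $v$ is admissible as a value function and that the verification inequality is saturated by the indicated controls.

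First I would verify the $C^2$ pasting at $x^*$. From \eqref{transform} and \eqref{trzy} the left-hand derivatives are $v'(x^*)=e^{-\xi^*}$ and $v''(x^*)=-e^{-\xi^*}/g'(\xi^*)$; since $\xi^*$ is chosen in \eqref{xistar} precisely so that $\frac{\eta a}{b^2}g'(\xi^*)=1$, i.e. $g'(\xi^*)=b^2/(\eta a)$, these match exactly the boundary conditions \eqref{ORID-v2-boundary-1}--\eqref{ORID-v2-boundary-2} imposed on the right piece, so $v'$ and $v''$ are continuous at $x^*$. Moreover the feedback reinsurance level computed from the left, $q^*(x^*)=1-\frac{\eta a}{b^2}g'(\xi^*)=0$, already coincides with $q\equiv 0$ on the right, so the left piece satisfies \eqref{ORID-eq:v2} at $x^*$; reading $\beta v(x^*)$ off that equation then forces continuity of $v$ itself at $x^*$. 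Together with $v(0)=0$ (from $g(\xi_0)=0$ and the choice of $\xi_0$ in \eqref{ORID-xi-0}) this gives $v\in C^2(0,\infty)$ with the correct boundary value.

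Next I would record that $v$ is nonnegative, increasing and strictly concave throughout: on $(0,x^*]$ this is the derivative computation at the end of the proof of Proposition \ref{nocheap1} (where $v'=e^{-\xi}>0$ and $v''=-e^{-\xi}/g'(\xi)<0$), and on $(x^*,\infty)$ it is exactly Proposition \ref{ORID-prop:no-rei}. With monotonicity and concavity in hand the pointwise maximizers \eqref{ORID-rein-prop-rate}--\eqref{func} are well defined; on $(0,x^*]$ they give the admissible pair $q^*(x)=1-\frac{\eta a}{b^2}g'(\xi)\in(0,1]$ and $c^*(x)=e^{\xi/(1-p)}$ realizing the maximum in \eqref{ORID-HJB} (shown in Proposition \ref{nocheap1}), while on $(x^*,\infty)$ Proposition \ref{ORID-prop:no-rei} guarantees that the constrained maximum in \eqref{ORID-eq:rein} is attained at $q=0$, whence the full HJB equation reduces to \eqref{ORID-eq:v2}, which $v$ solves. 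Thus $v$ satisfies \eqref{ORID-HJB}--\eqref{ORID-HJB-boundary} on $(0,\infty)$, and Lemma \ref{ORID-verif-lemma} delivers $v(x)\ge V(x)$.

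It remains to produce the matching lower bound by exhibiting an admissible strategy attaining $v$. I would insert the feedback controls $q(x)$ and $c(x)=(v'(x))^{-1/(1-p)}$ into \eqref{control}, turning $X^\pi$ into the diffusion announced after \eqref{func}, and check admissibility: $q\in[0,1]$ and $c\ge 0$ are immediate from the formulas, so the only real point is well-posedness of the closed-loop SDE up to $\tau$. For this strategy the HJB inequality in \eqref{itow} becomes an equality, so repeating the localisation, Fatou and monotone-convergence argument of Lemma \ref{ORID-verif-lemma} with equalities in place of the inequalities yields $\mathbb{E}_x[\int_0^\tau e^{-\beta s}u(c_s)\,ds]=v(x)$, provided the transversality term $\mathbb{E}_x[e^{-\beta(t\wedge\tau_{\xi_1}\wedge\tau_{\xi_2})}v(X^\pi_{t\wedge\tau_{\xi_1}\wedge\tau_{\xi_2}})]$ vanishes in the limit. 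Here the growth rate $v(x)\sim K_0 x^{\frac{\alpha-1}{\alpha}}$ was eliminated by forcing $A=0$, leaving $v(x)\sim\left(\frac{1-p}{\beta}\right)^{1-p}\frac{x^p}{p}$ from Theorem \ref{th:asyno}, which is compatible with the uniform bound \eqref{upperunform} and controls that boundary term. Combining the two inequalities gives $V=v$ and identifies the optimal controls as stated. The \emph{main obstacle} is exactly this last step: establishing that the feedback strategy is genuinely admissible, namely well-posedness of the closed-loop diffusion across the free boundary $x^*$ and up to ruin, and that the transversality term really vanishes so the verification inequality is saturated; the explicit asymptotics of Theorem \ref{th:asyno} are precisely what make this controllable.
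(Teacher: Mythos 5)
Your overall architecture is exactly the paper's: take the piecewise candidate from Propositions \ref{nocheap1} and \ref{ORID-prop:no-rei} (the pasting at $x^*$ you check explicitly is correct and worth writing out --- $g'(\xi^*)=b^2/(\eta a)$ matches \eqref{ORID-v2-boundary-1}--\eqref{ORID-v2-boundary-2}, and value matching is indeed forced by reading $\beta v(x^*)$ off \eqref{ORID-eq:v2}), get $v\geq V$ from Lemma \ref{ORID-verif-lemma}, and then run the It\^o computation with the feedback controls so that \eqref{itow} holds with equality. However, at the step you yourself flag as the main obstacle there is a genuine gap: you assert that the asymptotics $v(x)\sim\left(\frac{1-p}{\beta}\right)^{1-p}\frac{x^p}{p}$ and the bound \eqref{upperunform} ``control'' the transversality term, but pointwise growth of $v$ does not by itself make $\mathbb{E}_x\left[e^{-\beta(t\wedge\tau_{\xi_1}\wedge\tau_{\xi_2})}v\left(X^{\pi^*}_{t\wedge\tau_{\xi_1}\wedge\tau_{\xi_2}}\right)\right]$ vanish. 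The difficulty is in the localisation: on the event $\{\tau_{\xi_2}\leq t\}$ the stopped process sits at $\xi_2$ and $v(\xi_2)\to\infty$ as $\xi_2\to\infty$, so one must show that $\mathbb{P}_x(\tau_{\xi_2}\leq t)\,v(\xi_2)$ is killed in the limit --- i.e.\ one needs uniform integrability of the family $\left\{w\left(t,X^{\pi^*}_t\right)\right\}$, which is a statement about the \emph{running supremum} of the controlled process, not about the growth rate of $v$ along a deterministic ray. Fatou, which you invoke from Lemma \ref{ORID-verif-lemma}, only gives the inequality in the wrong direction here.

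The paper closes this gap with a concrete domination argument that your proposal is missing. Since $v$ is concave and, under Assumption \ref{thm1a}, $v'(0)=e^{-\xi_0}<\infty$ by \eqref{ORID-xi-0}, one has the linear bound $v(x)\leq v'(0)x$; then one introduces the auxiliary process $Y_t$ of \eqref{sdey} with $dY_t=\theta a\,dt+b(1-q^*_t)\,dB_t$, $Y_0=x$, which dominates $X^{\pi^*}_t$ pathwise (dropping the dividend outflow $c^*_t\geq 0$ and replacing the drift $(\theta-\eta q^*_t)a\leq\theta a$) and is a submartingale. Cauchy--Schwarz and Doob's $L^2$ maximal inequality give $\mathbb{E}_x\left[\sup_{0\leq t\leq T}X^{\pi^*}_t\right]\leq 2\sqrt{\mathbb{E}_x\left(Y_T\right)^2}\leq K_1T+K_2\sqrt{T}<\infty$ as in \eqref{uniforno}--\eqref{ieqy}, whence the running maximum of $w$ is integrable and dominated convergence applies to the boundary term. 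Note that your instinct to use the sublinear bound $v(x)\leq Kx^p$ instead of $v(x)\leq v'(0)x$ is not wrong, but it then requires the fractional-moment version of Doob's inequality, which is exactly the modification the paper reserves for the cheap case (Theorem \ref{thm2}, where $v'(0)=\infty$); in the non-cheap case the linear bound is available and simpler. Finally, the well-posedness of the closed-loop SDE \eqref{control}, which you also leave open, is handled in the paper by observing that $q(x)$ and $c(x)$ have bounded continuous derivatives on $(0,x^*]$ and on $(x^*,\infty)$ (where $q\equiv 0$ and $c$ is asymptotically linear by Theorem \ref{th:asyno}), so classical Lipschitz-type existence and uniqueness results apply.
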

\begin{proof}
Note that $c(x)$ and $q(x)$ have bounded continuous derivatives on
$(0, x^*]$.  For $x>x^*$, we have $q(x)\equiv 0$ and $c(x)$ is asymptotically equivalent to $\beta x/(1-p)$ as $x$ tends to $\infty$.
Therefore, their derivatives are bounded on $(x^*, \infty)$. Hence, under the optimal strategy, the classical result in SDE guarantees the existence
and uniqueness of the solution for SDE \eqref{control}.
Moreover, from Propositions \ref{nocheap1} and \ref{ORID-prop:no-rei} it follows that the right hand side $v(x)$
of \eqref{rhs1} is a nonnegative and concave solution of HJB equation
\eqref{ORID-HJB} with the boundary condition \eqref{ORID-HJB-boundary}. Similarly to the proof of Lemma 1,
applying It${\rm \hat{o}}$'s formula to the process $w(t,X^{\pi^*}_t)$ with the
corresponding strategies identified in Propositions \ref{nocheap1} and \ref{ORID-prop:no-rei}:
\[\pi^*=(q^*_t, c^*_t)=\left(q(X^{\pi^*}_t), c(X^{\pi^*}_t)\right)\]
produces:
\begin{align}
&w\left(t\wedge \tau_{\xi_1}\wedge \tau_{\xi_2},X^{\pi^*}_{t\wedge \tau_{\xi_1}\wedge \tau_{\xi_2}}\right)
\nonumber\\
&=v(x)+\int^{t\wedge \tau_{\xi_1}\wedge \tau_{\xi_2}}_0 e^{-\beta s}
\left\{(\theta-\eta q^*_s)a v{'}(X^{\pi^*}_s)
 +\frac{1}{2}b^2(1-q^*_s)^2v{''}(X^{\pi^*}_s)-c^*_s v{'}(X^{\pi^*}_s)-\beta v(X^{\pi^*}_s)\right\}ds \nonumber\\
&~~~~+\int^{t\wedge \tau_{\xi_1}\wedge \tau_{\xi_2}}_0 e^{-\beta s}v{'}(X^{\pi^*}_s)b(1-q^*_s)dB_s\nonumber\\
&=v(x)-\int^{t\wedge \tau_{\xi_1}\wedge \tau_{\xi_2}}_0 e^{-\beta s}
u(c^*_s)ds+\int^{t\wedge \tau_{\xi_1}\wedge \tau_{\xi_2}}_0 e^{-\beta s}v{'}(X^{\pi^*}_s)b(1-q^*_s)dB_s,
\label{verino}
\end{align}
where the last equality is obtained by the HJB equation \eqref{ORID-HJB}.
Taking expectation on both sides of \eqref{verino}, we get
\begin{align*}
\mathbb{E}_x\left[w\left(t\wedge \tau_{\xi_1}\wedge \tau_{\xi_2},X^{\pi^*}_{t\wedge \tau_{\xi_1}\wedge \tau_{\xi_2}}\right)+\int^{t\wedge \tau_{\xi_1}\wedge \tau_{\xi_2}}_0 e^{-\beta s}
u(c^*_s)ds\right]= v(x).
\end{align*}
We claim that
$\left\{w(t, X^{\pi^*}_{t})\right\}_{t\geq 0}$ is uniformly integrable.
Then letting $\xi_1 \downarrow 0$, $\xi_2\rightarrow \infty$ and $t\rightarrow \infty$,
and by applying the dominated convergence theorem and the monotone convergence theorem, we have
\[\mathbb{E}_x\left[\int^{\tau}_0 e^{-\beta s}
u(c^*_s)ds\right]=v(x).\]
Thus $v(x)\leq V(x)$ from the definition of the value function.
Combining above inequality with the result obtained in Verification Lemma \ref{ORID-verif-lemma} completes the proof.
In order to verify the uniformly integrability of $w(t, X^*_t)$, it is sufficient to show that the maximum function
$w^*(T, X^{\pi^*}_{T})=\sup_{0\leq t\leq T}w(t, X^{\pi^*}_{t})$ is integrable for each $T\geq 0$.
By the concavity of $v$, we have
\begin{equation}
v(x)\leq v^\prime(0)x.\nonumber
\end{equation}
Thus,
\begin{align}
\mathbb{E}_x w^*(T, X^{\pi^*}_{T})\leq \mathbb{E}_x\left[\sup\limits_{0\leq t\leq T} v\left(X^{\pi^*}_t\right)\right]
\leq v^\prime(0)\mathbb{E}_x\left[\sup\limits_{0\leq t\leq T} X^{\pi^*}_t\right]. \label{uniforno}
\end{align}
We introduce now a new stochastic process $Y_t$ having the following dynamics:
\begin{equation}
dY_t=\theta a dt+b(1-q^*_t)d B_t \label{sdey}
\end{equation}
with the same initial value as that of the reserve process $X^{\pi^*}_t$, that is, $Y_0=x$. Then one can observe that $\{Y_t\}_{0\leq t\leq T}$ is a submartingale and $X^{\pi^*}_t\leq Y_t$ for any $t \geq 0$. We denote $Y^*_T=\sup_{0\leq t\leq T}Y_t$.
Then,
\begin{align}
\mathbb{E}_x\left[\sup\limits_{0\leq t\leq T} X^{\pi^*}_t\right]
\leq \mathbb{E}_x\left(Y^*_T\right)
\leq\sqrt{\mathbb{E}_x \left(Y^*_T\right)^2}
\leq 2\sqrt{\mathbb{E}_x \left(Y_T\right)^2}
\leq K_1T+K_2 \sqrt{T}<\infty, \label{ieqy}
\end{align}
where $K_1, K_2$ are some fixed constants, the second inequality is obtained by Cauchy-Swartz inequality and the third inequality is derived by Doob's maximal inequality for submartingale. Therefore, in view of \eqref{uniforno} and \eqref{ieqy}, we can conclude that $\left\{w(t, X^{\pi^*}_{t})\right\}_{t\geq 0}$ is uniformly integrable.
This completes our proof.
\end{proof}

{\section{Cheap reinsurance}\label{sec:cheap}}
\indent
In this section we consider the case $\eta=\theta$, that is that cheap reinsurance holds true.
The corresponding HJB equation could be rewritten as follows:
\begin{eqnarray}
 & &\max_{0\leq q\leq1,c\geq 0}\bigg\{\theta(1-q)a v'(x)
 +\frac{1}{2}b^2(1-q)^2v{''}(x)-c v{'}(x)+u(c)-\beta v(x)
 \bigg\}=0.
 \label{ORID-ch-rein-HJB}
 \end{eqnarray}
\indent As argued in the previous section, we suppose
that HJB equation (\ref{ORID-ch-rein-HJB}) has an increasing concave solution $v$. Then, for $q$ without
restriction, the left hand side of (\ref{ORID-ch-rein-HJB}) attains its maximum at
\begin{eqnarray}
q(x)&=&1+\frac{\theta a}{b^2}\frac{v{'}(x)}{v{''}(x)}\label{ORID-rein-prop-rate2}
\end{eqnarray}
and \[c(x)=(v{'}(x))^{-\frac{1}{1-p}}. \]
Similarly, as in the non-cheap reinsurance case, we will find the point $x^*$ such that
$q(x)=0$, for all $x>x^*$. Therefore, we have to consider here also two intervals $(0,x^*]$ and $(x^*, \infty)$.
At the beginning we will analyze the case of $(0,x^*]$ on which $q(x)>0$.
We recall the definition of $\alpha$ in \eqref{ORID-alpha} with $\eta=\theta$.
Let
\begin{align}
x^*=\frac{b^2(1-p)}{\theta a}. \label{ORID-ch-rei-x^*}
\end{align}
\begin{prop}\label{propch1}
Assume that Assumption \ref{thm2a} holds.
Then for $x\in(0,x^*]$,
\begin{align*}
v(x)=M\frac{x^p}{p},
\end{align*}
where
\begin{align}
M=\left[\frac{\beta}{1-p}-\frac{\theta^2 a^2}{2b^2}\frac{p}{(1-p)^2}\right]^{-(1-p)}. \label{ORID-ch-rei-M}
\end{align}
\end{prop}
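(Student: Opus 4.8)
The plan is to exploit the fact that when $\eta=\theta$ the inhomogeneous drift term in the HJB equation disappears, so that the reduced ODE becomes scale-homogeneous and is solved exactly by a pure power function. Concretely, I would first substitute the unconstrained maximizers \eqref{ORID-rein-prop-rate2} and $c(x)=(v'(x))^{-1/(1-p)}$ back into \eqref{ORID-ch-rein-HJB}. Because $\eta=\theta$, the term $-(\eta-\theta)av'(x)$ that survived in \eqref{ORID:HJB2} now vanishes, and the HJB equation collapses to
\[
-\frac{\theta^2a^2}{2b^2}\frac{(v'(x))^2}{v''(x)}+\frac{1-p}{p}(v'(x))^{-\frac{p}{1-p}}-\beta v(x)=0,
\]
which is invariant under the scaling $x\mapsto\lambda x$, $v\mapsto\lambda^p v$.

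This scaling invariance motivates the ansatz $v(x)=M x^p/p$. I would plug this in, using $v'(x)=Mx^{p-1}$ and $v''(x)=-M(1-p)x^{p-2}$, and observe that every term carries the same factor $x^p$, so that after dividing by $x^p$ the equation reduces to the algebraic identity
\[
\frac{\theta^2a^2}{2b^2}\frac{M}{1-p}+\frac{1-p}{p}M^{-\frac{p}{1-p}}-\frac{\beta}{p}M=0.
\]
Dividing through by $M$ and using the identity $-\frac{p}{1-p}-1=-\frac{1}{1-p}$ isolates $M^{-1/(1-p)}$ and yields precisely the constant \eqref{ORID-ch-rei-M}.

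It then remains to verify the side conditions. The boundary condition $v(0)=0$ holds trivially since $p>0$. Positivity of $M$, equivalently $\frac{\beta}{1-p}>\frac{\theta^2a^2p}{2b^2(1-p)^2}$, is exactly the condition $\alpha(1-p)>1$ of Assumption \ref{thm2a} after substituting $\alpha=1+2b^2\beta/(\theta^2a^2)$; this guarantees $v$ is nonnegative, while $v'(x)=Mx^{p-1}>0$ and $v''(x)=-M(1-p)x^{p-2}<0$ give monotonicity and strict concavity. Finally I would confirm the self-consistency of working on $(0,x^*]$: substituting the ansatz into \eqref{ORID-rein-prop-rate2} gives $q(x)=1-\frac{\theta a}{b^2(1-p)}x$, which lies in $(0,1]$ precisely for $0<x\le x^*=b^2(1-p)/(\theta a)$ and vanishes at $x^*$, so the unconstrained maximizer respects the constraint $0\le q\le1$ and the working hypothesis $q(x)>0$ used to derive the reduced equation is validated.

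The computations here are all routine; the one genuinely essential observation, and the step I would state most carefully, is that cheapness ($\eta=\theta$) removes the inhomogeneous drift term and thereby makes the HJB equation scale-homogeneous. This is what permits the closed-form power solution, in sharp contrast to the non-cheap case of Proposition \ref{nocheap1}, where the surviving $-(\eta-\theta)av'(x)$ term forced the implicit change of variables through $g$.
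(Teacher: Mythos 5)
Your proposal is correct and follows essentially the same route as the paper: both substitute the unconstrained maximizers into \eqref{ORID-ch-rein-HJB} to obtain the reduced ODE \eqref{ORID:ch-rein-HJB2}, plug in the power function, solve the resulting algebraic equation for $M$, observe that $\alpha(1-p)>1$ is exactly positivity of $M$, and use $v(0)=0$; the only cosmetic differences are that the paper starts from Merton's general solution family $M(x-N)^p/p+Q$ and pins down $N=Q=0$ where you motivate the pure power ansatz by scaling invariance, and that your stated range for $q$ should read $q(x)\in[0,1)$ on $(0,x^*]$ since $q(x^*)=0$.
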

\begin{proof}
Note that (\ref{ORID-ch-rein-HJB}) can be rewritten as follows:
\begin{equation}
-\frac{\theta^2a^2}{2b^2}\frac{(v'(x))^2}{v''(x)}+\frac{1-p}{p}(v'(x))^{-\frac{p}{1-p}}-\beta v(x)=0.
\label{ORID:ch-rein-HJB2}
\end{equation}
The general solution of ODE (\ref{ORID:ch-rein-HJB2}) is given by (see \citet{merton1969lifetime} for details):
\begin{eqnarray}
v(x)=\frac{M(x-N)^p}{p}+Q. \nonumber
\end{eqnarray}
Then
\begin{align*}
v'(x)=M(x-N)^{p-1}, ~~~~~~~~~~~v''(x)=-M(1-p)(x-N)^{p-2}.
\end{align*}
Plugging it back into (\ref{ORID:ch-rein-HJB2}) produces
$Q=0$ and $M$ given in \eqref{ORID-ch-rei-M}.
Note that $\alpha>{1}/{(1-p)}$,
that is, $1+{2b^2\beta}/{\theta^2 a^2}>1/{(1-p)},$
which implies that $\beta>{\theta^2 a^2 p}/{(2b^2(1-p))}.$
Hence, $M>0$.
Finally, from the boundary condition $v(0)=0$ we get $N=0$. Thus the proof is completed.
\end{proof}

We will now consider the interval $(x^*,\infty)$ on which we will have $q(x)\equiv 0$.
Similarly as the non-cheap reinsurance case, from HJB equation (\ref{ORID-ch-rein-HJB}) we can derive the equation satisfied by the value function on this interval:
\begin{align}
\theta a v'(x)+\frac{1}{2}b^2v''(x)+\frac{1-p}{p}(v'(x))^{-\frac{p}{1-p}}-\beta v(x)=0 \label{ORID-eq:ch-rei-v2}
\end{align}
with the boundary conditions
\begin{align}
v'(x^*)&={M}(x^*)^{p-1},   \label{ORID-ch-rei-v2-boundary-1}
\\
\frac{v'(x^*)}{v''(x^*)}&=-\frac{x^*}{1-p}. \label{ORID-ch-rei-v2-boundary-2}
\end{align}
Note that the second-order ODE (\ref{ORID-eq:ch-rei-v2}) is the same as ODE (\ref{ORID-eq:v2}) except the two boundary conditions.
In the same way as in the proof of Proposition \ref{ORID-prop:no-rei}
one can verify that $q(x)=0$ for $x>x^*$. To do it one has to
prove inequality
\[-\frac{\theta a}{b^2}\frac{v'(x)}{v''(x)}>1\]
with $v(x)=Mx^p/p$
as a counterpart of
\eqref{counterpart}.
\begin{prop}\label{propch2}
Assume Assumptions \ref{thm2a}. Suppose $v$ on $(x^*,\infty)$ solves (\ref{ORID-eq:ch-rei-v2})-(\ref{ORID-ch-rei-v2-boundary-2}). Then $v$ satisfies $v'(x)>0$ and $v''(x)<0$ and
maximum in
\begin{align*}
\max_{0\leq q\leq 1}\left\{\theta(1-q)a v'(x)+\frac{1}{2}b^2(1-q)^2v''(x)\right\}=0
\end{align*}
is attained for $q=0$ in this case.
\end{prop}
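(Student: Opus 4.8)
The plan is to mirror the argument of Proposition~\ref{ORID-prop:no-rei} almost verbatim, exploiting the fact that the governing ODE \eqref{ORID-eq:ch-rei-v2} on $(x^*,\infty)$ coincides with \eqref{ORID-eq:v2}; only the boundary data \eqref{ORID-ch-rei-v2-boundary-1}--\eqref{ORID-ch-rei-v2-boundary-2} and the specialization $\eta=\theta$ differ. First I would establish $v'(x)>0$ by the same contradiction device: if $v'$ vanished at some $\ddot{x}>x^*$, then $v''(\ddot{x}-)<0$, and substituting $v'(\ddot{x})=0$ into \eqref{ORID-eq:ch-rei-v2} would force $\tfrac12 b^2 v''(\ddot{x}-)=\beta v(\ddot{x})>0$, a contradiction.

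For concavity I would introduce $y(v)=v'(x)>0$, so that $v''(x)=y(v)y'(v)$ and \eqref{ORID-eq:ch-rei-v2} transforms into exactly the first-order equation \eqref{ORID-eq:y-ch}. The boundary condition \eqref{ORID-ch-rei-v2-boundary-2}, together with $x^*=b^2(1-p)/(\theta a)$ from \eqref{ORID-ch-rei-x^*}, yields $y'(v^*)=-\theta a/b^2$, the precise analogue of the non-cheap boundary value $-\eta a/b^2$. Invoking Theorem~3 of \citet{hubalek2004optimizing}, equation \eqref{ORID-eq:y-ch} admits a unique decreasing convex solution, so $y'(v)<0$ and $y''(v)>0$ as in \eqref{once}, whence $v''(x)<0$.

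It then remains to show that the maximizer in the reinsurance problem is $q=0$, i.e. the counterpart of \eqref{counterpart}, namely $-\tfrac{\theta a}{b^2}v'(x)/v''(x)>1$, equivalently $y'(v)>-\theta a/b^2$. Setting $q(v)=\tfrac{\theta}{2}ay(v)+\tfrac{1-p}{p}y(v)^{-p/(1-p)}$ and using \eqref{ORID-eq:y} to rewrite it as $q(v)=\beta v-\tfrac{\theta a}{2}y(v)-\tfrac12 b^2 y(v)y'(v)$, the desired inequality becomes $q(v)<\beta v$ for $v>v^*$. The boundary value $y'(v^*)=-\theta a/b^2$ gives $q(v^*)=\beta v^*$, and a direct differentiation, now simpler than in the non-cheap case because $\eta-\theta=0$, yields $q'(v^*)=\beta-\theta^2a^2/(2b^2(1-p))<\beta$. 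Finally, I would argue by contradiction at a first point $\tilde v>v^*$ where $q(\tilde v)=\beta\tilde v$: there $y'(\tilde v)=-\theta a/b^2$ and, as in \eqref{ORID-eq:com-q}, $q'(\tilde v)>\beta$; differentiating \eqref{ORID-eq:y} at $\tilde v$ and inserting this value then forces $y''(\tilde v)<0$, contradicting the convexity $y''>0$ established above.

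The computations are routine once the template of Proposition~\ref{ORID-prop:no-rei} is in place; the only genuine point of care is the translation of \eqref{ORID-ch-rei-v2-boundary-2} into $y'(v^*)=-\theta a/b^2$ via the explicit value of $x^*$, and the verification of $q'(v^*)<\beta$, which in the non-cheap case relied on the delicate bound $\eta\alpha-2(\eta-\theta)>\alpha p\eta$ but here collapses to the trivially true $\theta\alpha>0$ upon setting $\eta=\theta$. Thus I do not expect a substantive obstacle: the real content of the proof is already carried by the uniqueness of the decreasing convex solution supplied by \citet{hubalek2004optimizing}.
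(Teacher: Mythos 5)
Your proposal is correct and is exactly the route the paper intends: the paper omits a written proof of this proposition, stating only that one argues ``in the same way as in the proof of Proposition~\ref{ORID-prop:no-rei}'' with the counterpart inequality $-\tfrac{\theta a}{b^2}v'(x)/v''(x)>1$, which is precisely what you carry out. Your translation of \eqref{ORID-ch-rei-v2-boundary-2} into $y'(v^*)=-\theta a/b^2$ via \eqref{ORID-ch-rei-x^*}, and the computation $q'(v^*)=\beta-\theta^2a^2/(2b^2(1-p))<\beta$ (consistent with the non-cheap formula, since $\eta\alpha-2(\eta-\theta)$ reduces to $\theta\alpha$ when $\eta=\theta$), both check out.
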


\begin{remark}
\rm{Since (\ref{ORID-eq:ch-rei-v2}) is the same as (\ref{ORID-eq:v2}), we will have the same asymptotic results for the cheap reinsurance case as that for the non-cheap reinsurance case.}
\end{remark}

Based on the above analysis, we give the second main result of this paper.
\begin{theorem}\label{thm2}
Assume Assumption \ref{thm2a}.The value function $V(x)$ is given by
\begin{equation}
V(x)=
\begin{cases}
M\frac{x^p}{p}
,&~~~~0<x\leq {x^*},
\\v(x)
,&~~~~x>x^*,
\end{cases}\nonumber
\end{equation}
with $x^*$ and
$M$ in (\ref{ORID-ch-rei-x^*}) and (\ref{ORID-ch-rei-M}), respectively,
and $v(x)$ solves (\ref{ORID-eq:ch-rei-v2})-(\ref{ORID-ch-rei-v2-boundary-2}).
The corresponding optimal dividend strategy is
\begin{eqnarray}
c(x)=(V'(x))^{-\frac{1}{1-p}} \nonumber
\end{eqnarray}
and the optimal reinsurance proportion is
\begin{equation}
q(x)=
\begin{cases}
1-\frac{\theta a}{b^2(1-p)}x
,&~~~~0<x\leq {x^*},
\\0
,&~~~~x>x^*.
\end{cases}\nonumber
\end{equation}
\end{theorem}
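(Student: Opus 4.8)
The plan is to prove Theorem \ref{thm2} by the same verification scheme already used for Theorem \ref{thm1}, now invoking Propositions \ref{propch1} and \ref{propch2} in place of Propositions \ref{nocheap1} and \ref{ORID-prop:no-rei}. First I would assemble the candidate function $v$ from its two pieces and check that it is a nonnegative, increasing, strictly concave $C^2$ solution of the HJB equation \eqref{ORID-ch-rein-HJB} with boundary condition \eqref{ORID-HJB-boundary}. On $(0,x^*]$ Proposition \ref{propch1} gives $v(x)=Mx^p/p$ with $M>0$, and on $(x^*,\infty)$ Proposition \ref{propch2} guarantees $v'>0$, $v''<0$ together with the fact that the unconstrained maximizer is cut off at $q=0$. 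The only genuinely new bookkeeping is the smooth pasting at $x^*$: the first boundary condition \eqref{ORID-ch-rei-v2-boundary-1} forces $C^1$ matching, while \eqref{ORID-ch-rei-v2-boundary-2} together with the explicit second derivative of $Mx^p/p$ forces $C^2$ matching; continuity of $v$ itself follows because $q(x^*)=0$ makes the two governing equations coincide at $x^*$.

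Next I would record that the proposed controls are admissible and generate a well-defined state process. On $(0,x^*]$ one reads off $q(x)=1-\tfrac{\theta a}{b^2(1-p)}x$ and $c(x)=(v'(x))^{-1/(1-p)}$, both with bounded continuous derivatives; on $(x^*,\infty)$ one has $q\equiv0$ and, by the asymptotics inherited from \citet{hubalek2004optimizing}, $c(x)\sim \beta x/(1-p)$, so the drift and diffusion coefficients in \eqref{control} are regular enough for the classical existence and uniqueness theorem to apply. With $X^{\pi^*}$ thus defined I would apply It\^o's formula to $w(t,X^{\pi^*}_t)=e^{-\beta t}v(X^{\pi^*}_t)$ exactly as in the proof of Theorem \ref{thm1}; because the optimal controls realize the maximum in \eqref{ORID-ch-rein-HJB}, the drift term collapses to $-e^{-\beta s}u(c^*_s)$ and the verification inequality becomes an equality, yielding $\mathbb{E}_x[w(t\wedge\tau_{\xi_1}\wedge\tau_{\xi_2},X^{\pi^*}_{\cdot})+\int_0^{t\wedge\tau_{\xi_1}\wedge\tau_{\xi_2}}e^{-\beta s}u(c^*_s)\,ds]=v(x)$.

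The main obstacle, and the step where genuine estimates are needed, is upgrading this identity to $v(x)=\mathbb{E}_x[\int_0^\tau e^{-\beta s}u(c^*_s)\,ds]$ by passing to the limit $\xi_1\downarrow0$, $\xi_2\to\infty$, $t\to\infty$. This requires uniform integrability of $\{w(t,X^{\pi^*}_t)\}$, which I would establish precisely as in Theorem \ref{thm1}: by concavity $v(x)\le v'(0)x$, and the controlled reserve is dominated pathwise by the submartingale $Y_t$ with $dY_t=\theta a\,dt+b(1-q^*_t)\,dB_t$, $Y_0=x$; Doob's maximal inequality together with a Cauchy--Schwarz bound then give $\mathbb{E}_x[\sup_{0\le t\le T}X^{\pi^*}_t]\le K_1 T+K_2\sqrt{T}<\infty$, so $w^*(T,X^{\pi^*}_T)$ is integrable for every $T$. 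Combining the resulting inequality $v(x)\le V(x)$ with the reverse inequality $v(x)\ge V(x)$ furnished by the Verification Lemma \ref{ORID-verif-lemma} gives $v=V$, and the displayed forms of $c$ and $q$ are then just the identified maximizers. I expect the verification of the two pasting conditions at $x^*$ and the admissibility/asymptotic estimate for $c$ near infinity to be the only places where the argument differs in detail from the non-cheap case.
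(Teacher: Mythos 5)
Your overall architecture matches the paper's proof of Theorem \ref{thm2}: the same verification scheme as for Theorem \ref{thm1}, the same It\^o/equality step under the optimal Markov controls, and the same reduction of everything to uniform integrability of $\{w(t,X^{\pi^*}_t)\}$, with Propositions \ref{propch1} and \ref{propch2} supplying the candidate. However, the uniform-integrability step, which you propose to establish ``precisely as in Theorem \ref{thm1}: by concavity $v(x)\le v'(0)x$,'' is exactly the point where that argument breaks down in the cheap case --- and the paper explicitly flags this. On $(0,x^*]$ the candidate is $v(x)=Mx^p/p$ with $p\in(0,1)$, so $v'(x)=Mx^{p-1}\to\infty$ as $x\downarrow 0$; hence $v'(0^+)=\infty$, the bound $v(x)\le v'(0)x$ is vacuous, and your chain $\mathbb{E}_x\, w^*(T,X^{\pi^*}_T)\le v'(0)\,\mathbb{E}_x\bigl[\sup_{0\le t\le T}X^{\pi^*}_t\bigr]$ gives no information. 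As written, your proof does not establish uniform integrability, so the passage from the pre-limit identity to $v(x)=\mathbb{E}_x\bigl[\int_0^\tau e^{-\beta s}u(c^*_s)\,ds\bigr]$ is not justified.

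The paper repairs this by replacing the linear bound with the global power bound $v(x)\le Kx^p$, available from \eqref{upperunform}, dominating $X^{\pi^*}$ pathwise by the process $Y$ of \eqref{sdey} as before, and then controlling $\mathbb{E}_x\bigl[(Y^*_T)^p\bigr]$ rather than $\mathbb{E}_x[Y^*_T]$: choosing integers $m,n$ with $p\le m/n$, iterated Cauchy--Schwarz/Jensen estimates combined with Doob's $L^{2m}$ maximal inequality for the submartingale $Y$ give $\mathbb{E}_x\bigl[(Y^*_T)^p\bigr]\le H_1T^{m/n}+H_2T^{m/2n}<\infty$. (A shortcut would also work: since $Y^*_T\ge Y_0=x>0$ and $p<1$, one has $(Y^*_T)^p\le 1+Y^*_T$, after which the moment estimate from Theorem \ref{thm1} applies verbatim; but some substitute for the unusable bound $v(x)\le v'(0)x$ is indispensable.) The rest of your proposal --- the $C^1$/$C^2$ pasting at $x^*$ via \eqref{ORID-ch-rei-v2-boundary-1}--\eqref{ORID-ch-rei-v2-boundary-2}, continuity of $v$ at $x^*$ from the coincidence of the two equations when $q(x^*)=0$, admissibility and SDE well-posedness, and the final combination with Lemma \ref{ORID-verif-lemma} --- agrees with the paper.
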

\begin{proof}
Similarly like in the proof of Theorem \ref{thm1} one can prove the existence
and uniqueness of the solution for SDE \eqref{control} for the chosen optimal strategies.
Furthermore, in order to confirm that the candidate solution $v(x)$ is indeed the value function, it is sufficient to verify that
$\left\{w(t, X^{\pi^*}_{t})\right\}_{0\leq t\leq T}$ is uniformly integrable,
where $\pi^*$ is the corresponding strategy given in
Propositions \ref{propch1} and \ref{propch2}. However, since $\lim\limits_{x\to 0}v^{\prime}(x)=\infty$, we cannot use the proof of Theorem \ref{thm1} directly
and some further modification is required here. By \eqref{upperunform} we can find a large constant $K$ such that for $x>0$,
\[v(x)\leq K x^p.\]
Then,
\begin{align*}
\mathbb{E}_x w^*(T, X^{\pi^*}_{T})\leq \mathbb{E}_x\left[\sup\limits_{0\leq t\leq T} v\left(X^{\pi^*}_t\right)\right]
\leq K\mathbb{E}_x\left[\sup\limits_{0\leq t\leq T} \left(X^{\pi^*}_t\right)^p\right]
\leq K \mathbb{E}_x\left[\left(Y^*_T\right)^p\right],
\end{align*}
where $Y_t$ is given in \eqref{sdey} and $Y^*_t$ is its maximal process.
Moreover, since $p\in(0,1)$, we can find two positive integral such that $p\leq m/n$.
Thus,
\begin{align*}
\mathbb{E}_x[\left(Y^*_T\right)^p]\leq\mathbb{E}_x[\left(Y^*_T\right)^{m/n}]
\leq \left\{\mathbb{E}_x \left(Y^*_T\right)^{2m}\right\}^{\frac{1}{2n}}
\leq \left(\frac{2m}{2m-1}\right)^{\frac{m}{n}}\left\{\mathbb{E}_x \left(Y_T\right)^{2m}\right\}^{\frac{1}{2n}}
\leq H_1T^{\frac{m}{n}}+H_2 T^{\frac{m}{2n}}<\infty,
\end{align*}
where $H_1, H_2$ are some fixed constants. Therefore, $\left\{w(t, X^{\pi^*}_{t})\right\}_{0\leq t\leq T}$ is uniformly integrable.
This completes the proof.
\end{proof}
\section{Numerical examples}\label{sec:numericalexamples}
In this section, we provide numerical examples to demonstrate the results obtained in Sections \ref{sec:noncheap} and \ref{sec:cheap}. We focus on the effects of the power $p$ and the claim volatility $b$ on the optimal strategies and the threshold $x^*$ for both cheap and noncheap reinsurance cases.

 In Figure \ref{fig1:subfig} we plot the optimal dividend and optimal reinsurance proportion for surplus $x$ ranging from
 $0$ to $50$. In Figure \ref{fig1:subfig:a}, we consider the noncheap reinsurance case. According to Theorem \ref{thm1},
 we find the optimal threshold $x^*$ equals to $11.2578$, which implies that the insurer will take all the claims without buying any reinsurance when the surplus is larger than $11.2578$. From this graph we can also observe that the optimal dividend rate $c^*(x)$ is nearly linear increasing with the surplus $x$, while the optimal reinsurance proportion is decreasing with $x$.  When its surplus is sufficiently small, the insurer will divert most of its risk incurred by claims to the reinsurer, which seems intuitively reasonable. By comparison, the cheap reinsurance case has been illustrated in Figure \ref{fig1:subfig:b}, which exhibits similar phenomenon as in Figure \ref{fig1:subfig:a}. In accordance with Theorem \ref{thm2}, in this case, the optimal threshold $x^*$ equals to $20.6250$, which is larger than that in Figure \ref{fig1:subfig}. This happens because the premium paid to the reinsurer in the cheap reinsurance case is less than that paid in the noncheap one. This means that insurer will be more willing to divert the claim risk to the reinsurer in cheap reinsurance case even though its reserves are relatively large.

 \par Figure \ref{fig2:subfig} displays the optimal dividend strategy and the optimal reinsurance proportion as the functions of $p\in[0,1]$ for fixed $x=5$. From this graph,
 we can observe that the optimal dividend rate increases with increasing $p$, while the optimal reinsurance proportion decreases.
 This phenomenon can be explained by noting that the parameter $1-p$ represents the risk aversion of the insurer, which indicates the insurer's attitude towards the risk. We also find that for a fixed $p$, the optimal dividend rate and the optimal reinsurance proportion in cheap reinsurance case are larger than those in noncheap one.
 The reason behind this property is that insurer usually pays less under the cheap reinsurance contract and it tends to transfer most of the risks to the reinsurer
 paying extra money for dividends.

 Figure \ref{fig3:subfig} shows the values of $x^*$ with respect of changing parameters $p$ and $b$.
 In Figure \ref{fig3:subfig:a} we set $b=0.5$ and change the values of $p$. Note that $x^*$ decreases as $p$ increases.
 Because $p$ reflects the insurer's tolerance of the risks, as $p$ becomes larger, the insurer becomes less risk averse.
 In Figure \ref{fig3:subfig:b} we fix $p=0.01$ and  observe that  $x^*$ increases as $b$ increases.
 Indeed, this is explained by the fact that the larger values of $b$ indicate that more risks will be taken by the insurer.
 In both Figures \ref{fig3:subfig:a} and \ref{fig3:subfig:b} the values of $x^*$ in cheap reinsurance case are bigger than those in noncheap reinsurance case
 which coincides with the analysis demonstrated in Figure \ref{fig1:subfig}.

\begin{figure}[!htbp]
  \centering
  \subfigure[$\eta=0.8, \theta=0.4$]{
    \label{fig1:subfig:a} 
    \includegraphics[width=3.0in]{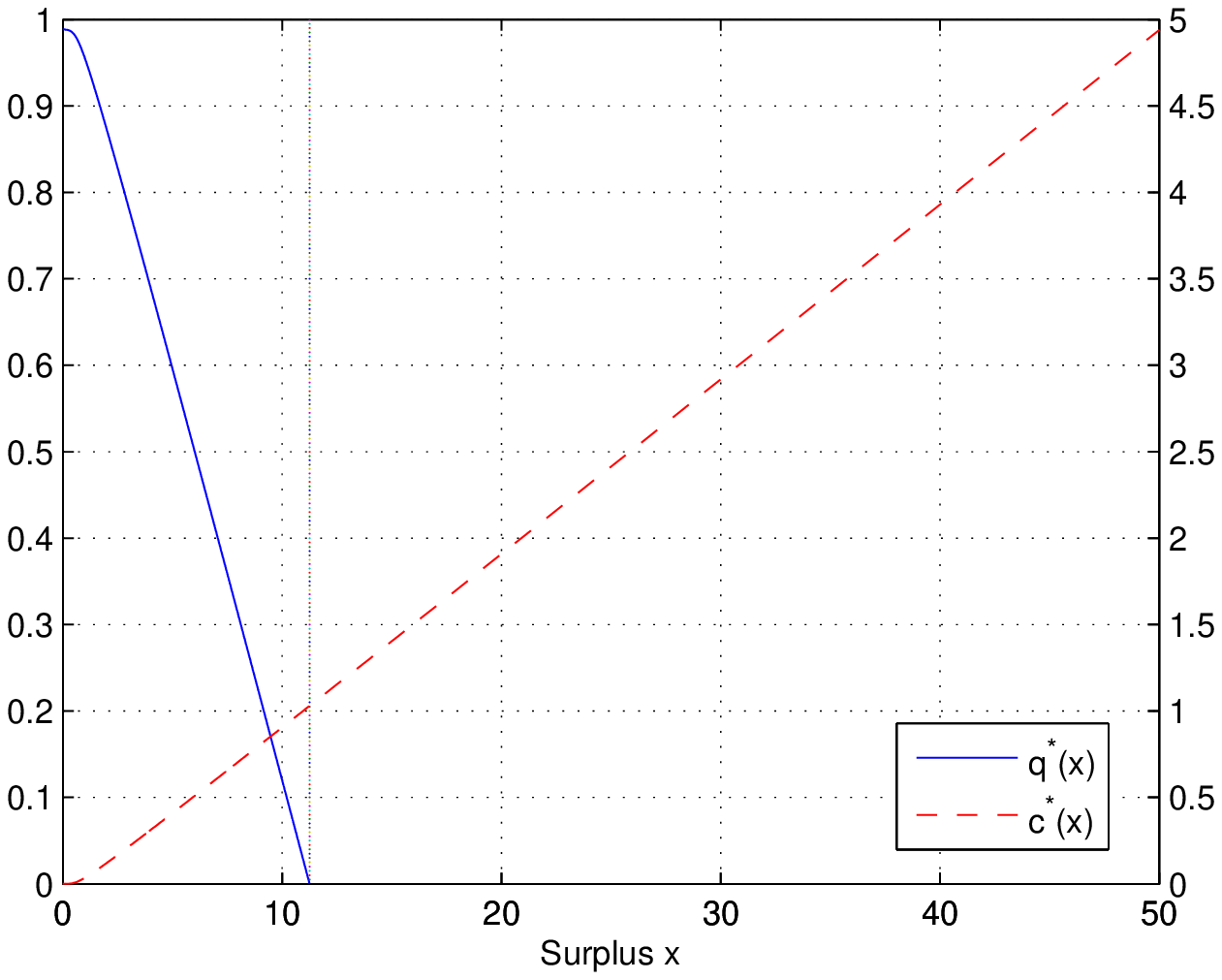}}
   \subfigure[$\eta=\theta=0.4$]{
    \label{fig1:subfig:b} 
    \includegraphics[width=3.0in]{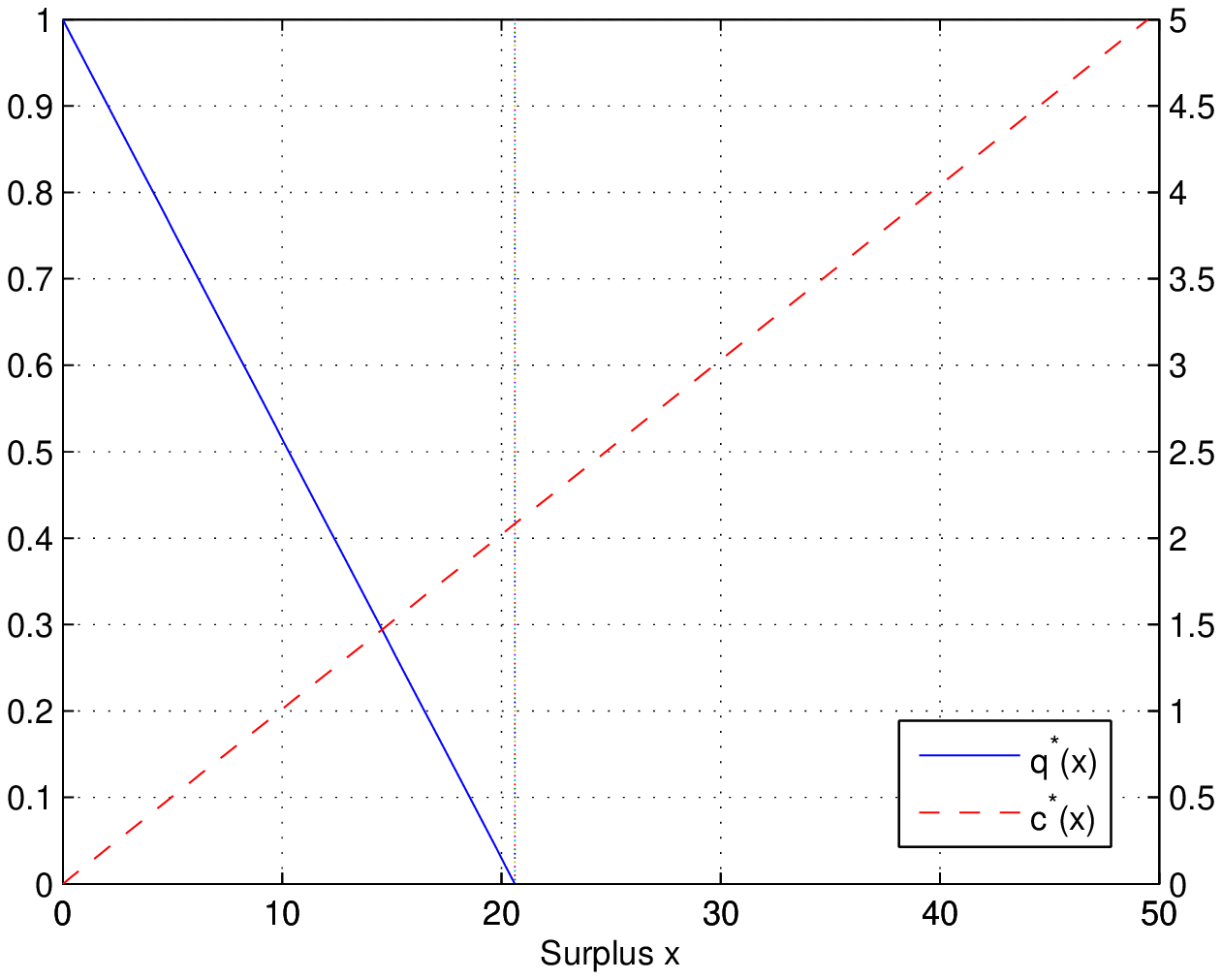}}
  \caption{$q^*(x)$ and $c^*(x)$ change with $x$ for $a=0.03, b=0.5, p=0.01, \beta=0.1.$}
  \label{fig1:subfig} 
\end{figure}
\begin{figure}[!htbp]
  \centering
  \subfigure[$c^*(x)$ vs $p$]{
    \label{fig2:subfig:a} 
    \includegraphics[width=3.0in]{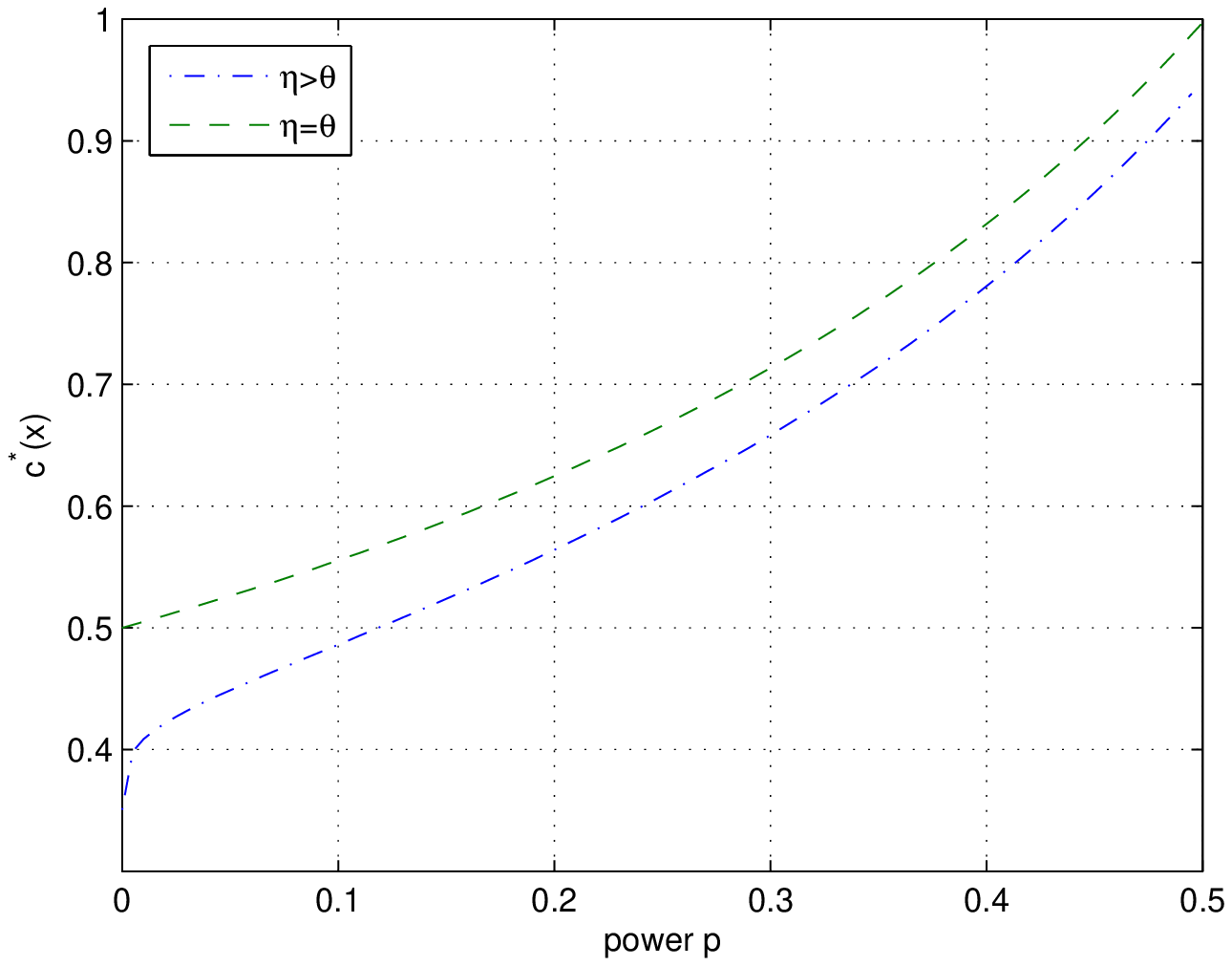}}
   \subfigure[$q^*(x)$ vs $p$ ]{
    \label{fig2:subfig:b} 
    \includegraphics[width=3.0in]{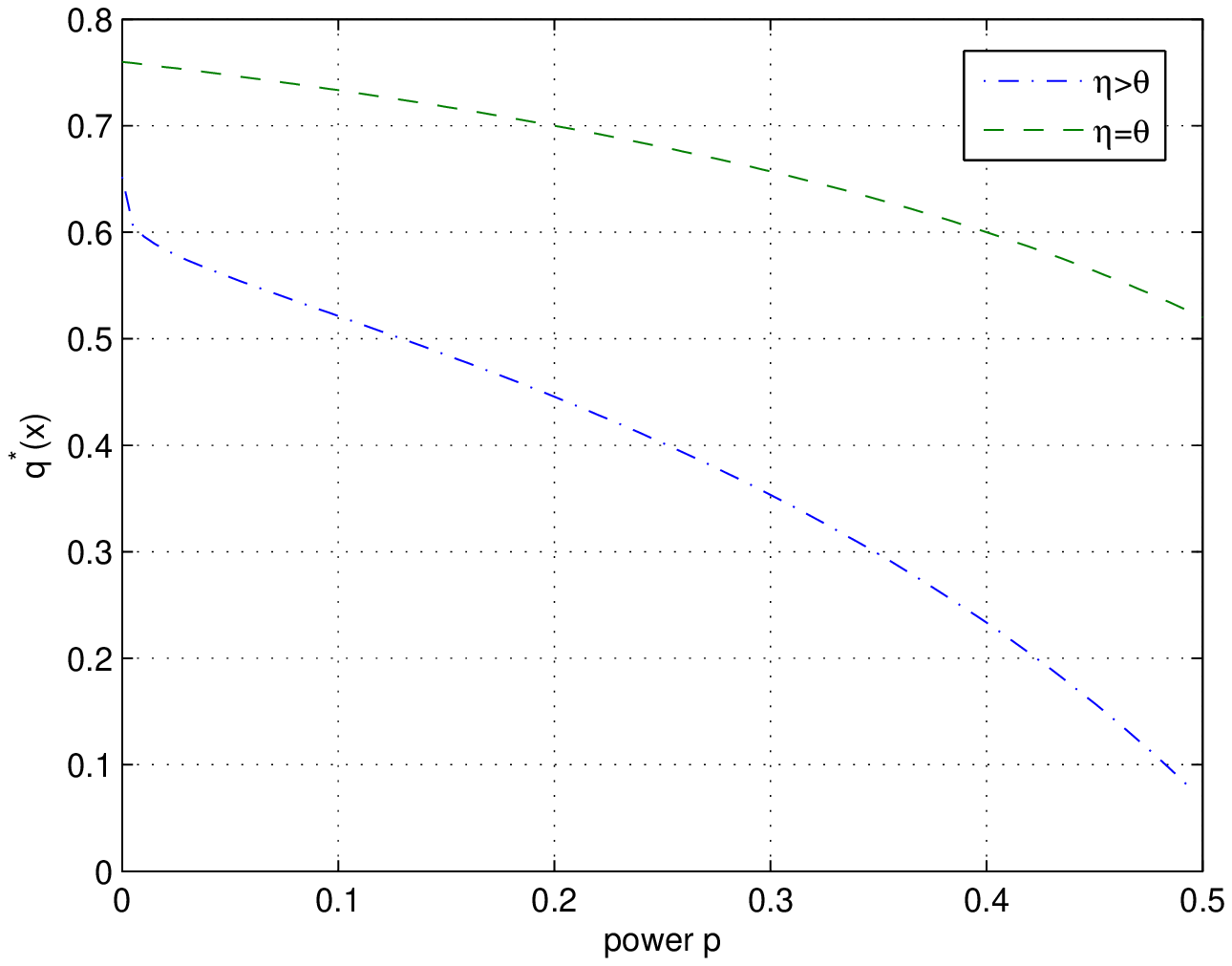}}
  \caption{$c^*(x)$ and $q^*(x)$ change with $p$ for $a=0.03, b=0.5, \beta=0.1, \eta=0.8, \theta=0.4, x=5.$}
  \label{fig2:subfig} 
\end{figure}
\begin{figure}[!htbp]
  \centering
  \subfigure[$x^*$ vs $p$ for $b=0.5$]{
    \label{fig3:subfig:a} 
    \includegraphics[width=3.0in]{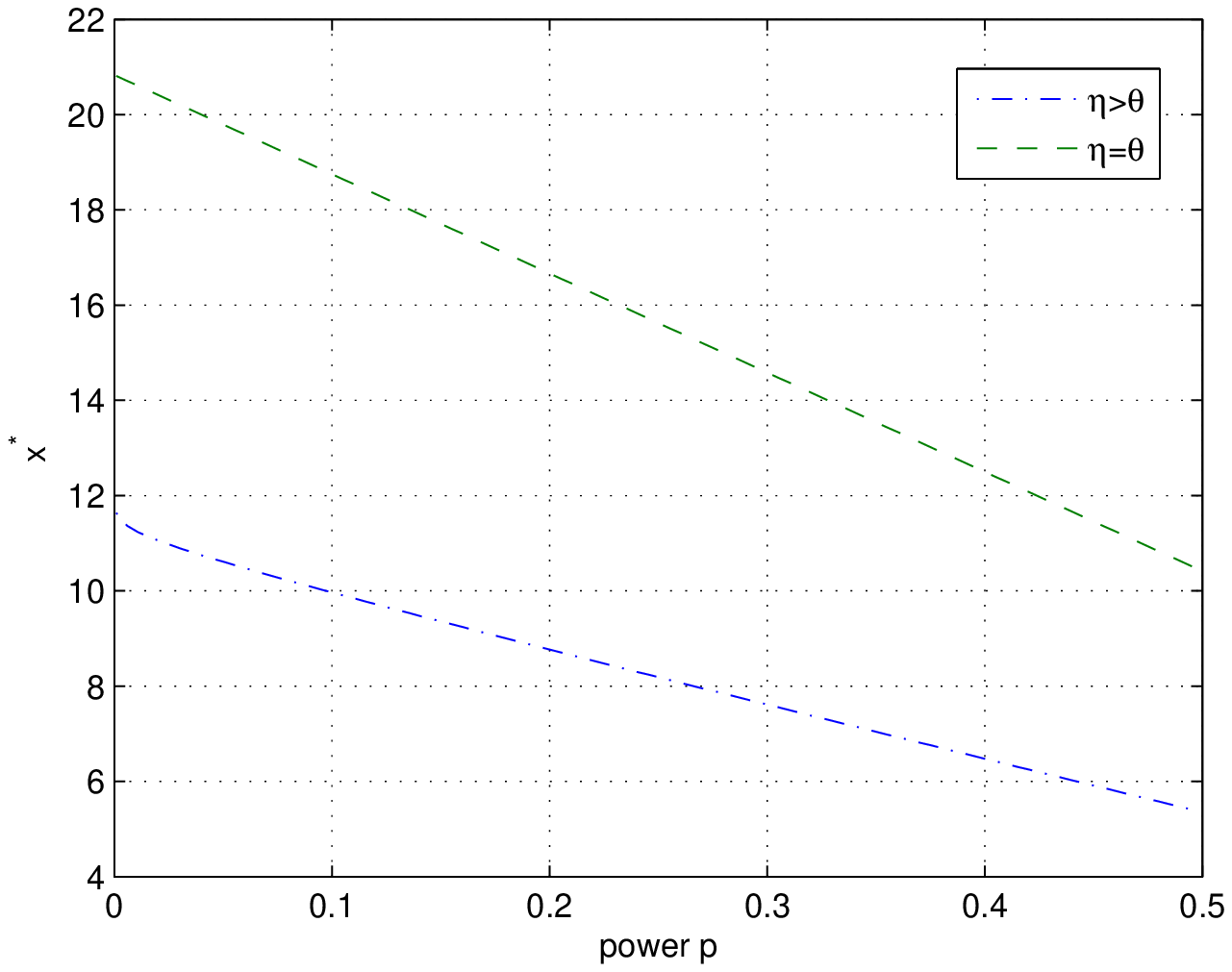}}
   \subfigure[$x^*$ vs $b$ for $p=0.01$]{
    \label{fig3:subfig:b} 
    \includegraphics[width=3.0in]{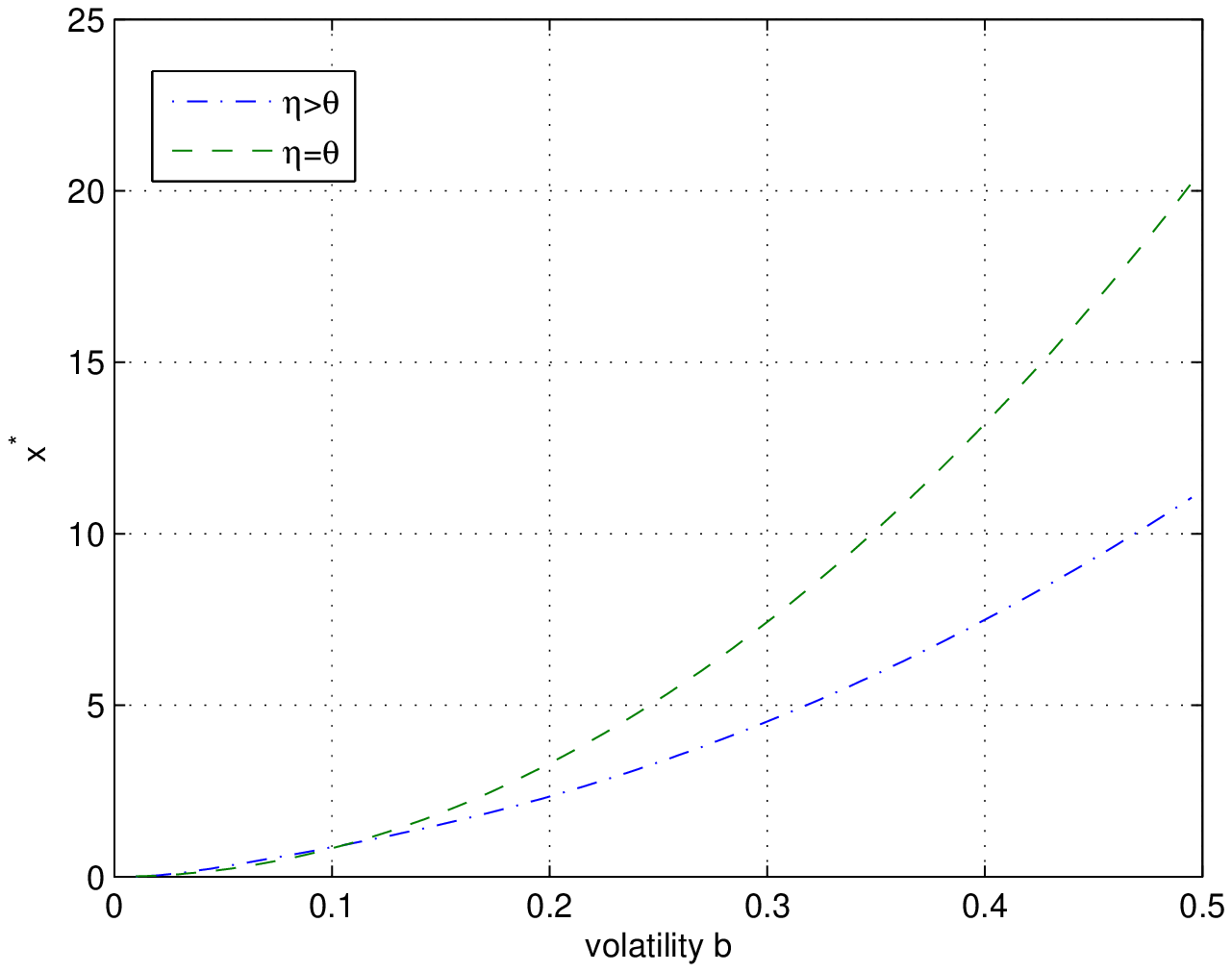}}
  \caption{$x^*$ changes with $p$ and $b$ respectively for $a=0.03, \beta=0.1, \eta=0.8, \theta=0.4.$}
  \label{fig3:subfig} 
\end{figure}

\section{Conclusions}\label{sec:conclusions}
In this paper we manage to find the value function maximizing the discounted cumulative  dividends payments paid up to ruin time
where the strategy is based on choice of dividend payments and the proportion of the reinsurance policy.
We analyzed only the Constant Relative Risk Aversion utility function.
The future research will concern other utility functions.
One can also choose more general stopping time of the regulated risk process.
For example one can consider Parisian ruin time as it was done in
\citet{Czarna}.
Finally, it is also very interesting to incorporate so-called Gerber-Shiu function in the value function as it was realized in
\citet{APP2}. One can analyze the investments into risky assets as well, see \citet{PG}.
All of these problems are more complex and left for future research.

\bibliographystyle{apalike}

\end{document}